\PassOptionsToPackage{unicode}{hyperref}
\PassOptionsToPackage{hyphens}{url}
\PassOptionsToPackage{dvipsnames,svgnames,x11names}{xcolor}
\documentclass[11pt]{article}
\usepackage[letterpaper,margin=0.85in]{geometry}

\usepackage{amsmath,amssymb}
\usepackage{iftex}
\ifPDFTeX
  \usepackage[T1]{fontenc}
  \usepackage[utf8]{inputenc}
  \usepackage{textcomp} 
\else 
  \usepackage{unicode-math}
  \defaultfontfeatures{Scale=MatchLowercase}
  \defaultfontfeatures[\rmfamily]{Ligatures=TeX,Scale=1}
\fi
\usepackage{lmodern}
\ifPDFTeX\else  
\fi
\IfFileExists{upquote.sty}{\usepackage{upquote}}{}
\IfFileExists{microtype.sty}{
  \usepackage[]{microtype}
  \UseMicrotypeSet[protrusion]{basicmath} 
}{}
\makeatletter
\@ifundefined{KOMAClassName}{
  \IfFileExists{parskip.sty}{%
    \usepackage{parskip}
  }{
    \setlength{\parindent}{0pt}
    \setlength{\parskip}{6pt plus 2pt minus 1pt}}
}{
  \KOMAoptions{parskip=half}}
\makeatother
\usepackage{xcolor}
\makeatletter
\ifx\paragraph\undefined\else
  \let\oldparagraph\paragraph
  \renewcommand{\paragraph}{
    \@ifstar
      \xxxParagraphStar
      \xxxParagraphNoStar
  }
  \newcommand{\xxxParagraphStar}[1]{\oldparagraph*{#1}\mbox{}}
  \newcommand{\xxxParagraphNoStar}[1]{\oldparagraph{#1}\mbox{}}
\fi
\ifx\subparagraph\undefined\else
  \let\oldsubparagraph\subparagraph
  \renewcommand{\subparagraph}{
    \@ifstar
      \xxxSubParagraphStar
      \xxxSubParagraphNoStar
  }
  \newcommand{\xxxSubParagraphStar}[1]{\oldsubparagraph*{#1}\mbox{}}
  \newcommand{\xxxSubParagraphNoStar}[1]{\oldsubparagraph{#1}\mbox{}}
\fi
\makeatother

\usepackage{longtable,booktabs,array}
\usepackage{calc} 
\usepackage{etoolbox}
\makeatletter
\patchcmd\longtable{\par}{\if@noskipsec\mbox{}\fi\par}{}{}
\makeatother
\IfFileExists{footnotehyper.sty}{\usepackage{footnotehyper}}{\usepackage{footnote}}
\makesavenoteenv{longtable}
\usepackage{graphicx}
\makeatletter
\def\maxwidth{\ifdim\Gin@nat@width>\linewidth\linewidth\else\Gin@nat@width\fi}
\def\maxheight{\ifdim\Gin@nat@height>\textheight\textheight\else\Gin@nat@height\fi}
\makeatother
\setkeys{Gin}{width=\maxwidth,height=\maxheight,keepaspectratio}
\makeatletter
\def\fps@figure{htbp}
\makeatother

\makeatletter
\@ifpackageloaded{caption}{}{\usepackage{caption}}
\AtBeginDocument{%
\ifdefined\contentsname
  \renewcommand*\contentsname{Table of contents}
\else
  \newcommand\contentsname{Table of contents}
\fi
\ifdefined\listfigurename
  \renewcommand*\listfigurename{List of Figures}
\else
  \newcommand\listfigurename{List of Figures}
\fi
\ifdefined\listtablename
  \renewcommand*\listtablename{List of Tables}
\else
  \newcommand\listtablename{List of Tables}
\fi
\ifdefined\figurename
  \renewcommand*\figurename{Figure}
\else
  \newcommand\figurename{Figure}
\fi
\ifdefined\tablename
  \renewcommand*\tablename{Table}
\else
  \newcommand\tablename{Table}
\fi
}
\@ifpackageloaded{float}{}{\usepackage{float}}
\floatstyle{ruled}
\@ifundefined{c@chapter}{\newfloat{codelisting}{h}{lop}}{\newfloat{codelisting}{h}{lop}[chapter]}
\floatname{codelisting}{Listing}

\makeatother
\makeatletter
\@ifpackageloaded{caption}{}{\usepackage{caption}}
\@ifpackageloaded{subcaption}{}{\usepackage{subcaption}}
\makeatother

\ifLuaTeX
  \usepackage{selnolig}  
\fi
\usepackage[]{natbib}
\usepackage{bookmark}

\usepackage{amsthm}
\usepackage{multirow}
\usepackage{mathtools}
\IfFileExists{xurl.sty}{\usepackage{xurl}}{}
\newcommand{\Normal}{\mathcal{N}}

\newtheorem{theorem}{Theorem}

\newcommand{\bmsection}{\section}

\newif\ifshowrevisions
\showrevisionsfalse
\ifshowrevisions
  \newcommand{\rev}[1]{{\bfseries #1}}
  \newenvironment{revision}{\begingroup\bfseries\boldmath}{\endgroup}
\else
  \newcommand{\rev}[1]{#1}
  
\fi

\providecommand{\harvardurl}[1]{\textbf{URL:} \url{#1}}
\renewcommand{\harvardurl}[1]{\textbf{URL:} \url{#1}}

\hypersetup{
  pdftitle={Standardized Dynamic Borrowing: Uniform Control of Information Borrowing},
  pdfauthor={Huan Wang; Fei Wu; Yeh-Fong Chen},
  pdfkeywords={Bayesian methods, dynamic borrowing, evidence synthesis, pediatric extrapolation, clinical trials, operating characteristics},
  colorlinks=true,
  linkcolor={blue},
  filecolor={Maroon},
  citecolor={Blue},
  urlcolor={Blue},
  pdfcreator={LaTeX}}

\begin{document}

\def\spacingset#1{\renewcommand{\baselinestretch}{#1}\small\normalsize}
\spacingset{1}

\title{\bf Standardized Dynamic Borrowing: Uniform Control of Information Borrowing}
\author{Huan Wang, Fei Wu, and Yeh-Fong Chen\thanks{Corresponding author: Yeh-Fong Chen, 10903 New Hampshire Avenue, Silver Spring, MD 20993, USA. Email: \texttt{yehfong.chen@fda.hhs.gov}.}\\
Division of Biometrics IX, FDA/CDER/OTS/OB, Maryland, USA}
\date{}
\maketitle

\begin{abstract}

Dynamic borrowing can improve efficiency in clinical trials using external information, but existing methods may not offer transparent, prespecified control over borrowing. We introduce Standardized Dynamic Borrowing (SDB), which combines a source-informed component with a no-borrowing component and updates their relative odds using a Bayes factor standardized by its supremum over admissible target data.  The posterior density under the no-borrowing component is obtained by normalizing the target-data likelihood. The prespecified source-component initial weight is a sharp uniform upper bound on its posterior weight, providing an interpretable borrowing constraint. When source and target models belong to the same one-parameter canonical exponential family, posterior expected local-information-ratio effective sample size equals target sample size plus borrowed source-prior effective sample size minus a nonnegative heterogeneity penalty. For normally distributed outcomes, borrowing depends on a standardized source-target conflict $Z$-score and decreases according to a standard normal kernel. We also jointly select the initial weight and target sample size to meet prespecified power and Type I error requirements. Normal and binomial illustrations demonstrate how SDB translates source-target compatibility into data-adaptive borrowing and how the calibration procedure prespecifies the source-component initial weight and target sample size under operating-characteristic constraints. These results provide a transparent framework for prespecifying, controlling, and interpreting dynamic borrowing.

\end{abstract}

\noindent%
{\it Keywords:} Bayesian methods, Bayesian trial design, evidence synthesis, pediatric extrapolation, clinical trials, \rev{operating characteristics}

\medskip
\spacingset{1}

\section{Introduction}

When a small target study is supported by a larger, credible source dataset, judicious borrowing can materially improve precision and decision making. The situation is common in pediatric drug development, rare diseases, and important subgroups in confirmatory trials, where the available sample size in the target population is typically limited but substantial information is available from related studies (e.g., adult programs or historical trials)~\citep{schoenfeld2009bayesian,spanakis2023addressing}. Regulatory guidance in pediatric drug development, such as ICH E11A and the EMA extrapolation reflection paper~\citep{ICH_E11A_2024,EMA_Extrapolation_2018} indicates that data from related populations may be used to reduce the extent of new studies when similarity in disease course and treatment response has been established, while it has been highlighted that the borrowed information should not be allowed to overwhelm the data from the target population~\citep{Travis2023InformativeBayes,FDA2026Bayesian}. 

As a result, there has been considerable interest in Bayesian methods that discount external information according to its compatibility with the target data, so that the amount of information taken from the source is reduced when the target data diverge from the external evidence. Such discounting strategies include robust mixture priors (RMPs), power priors, commensurate priors, hierarchical models, self-adapting mixture (SAM) priors, and elastic priors~\citep{Travis2023InformativeBayes,yang2023sam,jiang2023elastic}. The RMP uses a mixture distribution to combine a source-informed component with a proper weakly informative component and updates the source-component weight after observing the target data~\citep{schmidli2014robust,best2021assessing}. The power prior discounts the contribution from the source likelihood through an exponent~\citep{de2006power,ibrahim2015power}. The commensurate prior treats the target effect as a non-systematically biased representation of the source effect, with a commensurability parameter controlling the amount of borrowing~\citep{hobbs2012commensurate}. Hierarchical models analyze multiple related cohorts jointly by introducing cohort-specific random effects around a shared overall effect, so borrowing is driven by the estimated between-cohort heterogeneity~\citep{gamalo2019incorporating}. The SAM prior uses a data-driven mixing weight in an RMP by evaluating a likelihood ratio test against a predefined clinically significant difference in the treatment effect~\citep{yang2023sam}. Elastic priors provide another “discounting” approach by constructing a prior from the historical posterior and inflating its variance using a function of a user-specified congruence measure between historical and current data~\citep{jiang2023elastic}.

Although these “discounting” models can downweight external information when cohorts appear incompatible, the resulting amount of borrowing can remain sensitive to prespecified modeling choices. Taking the RMP as an example, the prior mixture weight is not itself a cap on the source-component posterior weight. In addition, the posterior borrowing profile depends on the selected proper weakly informative prior. Specifying this weakly informative prior can be challenging and may lack a clear scientific justification. A power-prior approach developed by Psioda and Xue provides another example~\citep{psioda2020bayesian}. They proposed choosing a power-prior exponent by measuring how well the observed target data are supported by their prior predictive distribution derived from the source. The authors proposed several approaches to specify the power-prior exponent, e.g., using a Bayesian p-value or a mapping function of the sample size ratio. However, it remains unclear whether these specifications of the power-prior exponent are suitable for defining the degree to which information is borrowed from the source data.

We propose Standardized Dynamic Borrowing (SDB), which retains the two-component posterior-mixture structure of the RMP but replaces the weakly informative posterior with the exact no-borrowing posterior.  SDB standardizes the source-to-no-borrowing Bayes factor by its supremum over the admissible target-data space. This construction guarantees that the source-component posterior weight does not exceed the prespecified source-component initial weight for any admissible target outcome. When this upper bound is attained, the corresponding target outcome is a compatibility anchor. The no-borrowing posterior is obtained by normalizing the target-data likelihood, so SDB does not require specification of a separate proper weakly informative prior.  The SDB construction also provides several useful interpretations of borrowing. 
When the source prior is obtained from a normalized source likelihood in the same one-parameter canonical exponential family as the target likelihood, the posterior expected local-information-ratio (ELIR) effective sample size (ESS) decomposition identifies the borrowed source-prior ESS as the source-prior ELIR ESS multiplied by the source-component posterior weight.  The source-component initial weight therefore bounds the fraction of the source prior's ELIR ESS that can be borrowed in the target analysis.  For the normal distribution, source--target conflict can be summarized by a standardized $Z$-score, and SDB attenuates the source-component odds according to a standard normal kernel in that score. We further formulate a method for jointly selecting the SDB source-component initial weight and target sample size using power and Type I error constraints. 

Overall, SDB preserves an RMP-like two-component posterior representation while enhancing prespecification and interpretability. These features also reflect considerations highlighted in the FDA draft guidance for Bayesian trials~\citep{FDA2026Bayesian}, including prospective justification of prior assumptions, decision criteria, and operating characteristics under plausible scenarios. 

To develop these ideas, Section 2 introduces SDB, establishes its theoretical properties, discusses sample size planning, and presents the normal and binomial special cases. Section 3 illustrates the method through a normal clinical trial case study. Section 4 provides a binomial numerical illustration. Sections 5 and 6 provide the discussion and conclusion, respectively. 

\section{Methods}

\subsection{Robust Mixture Prior}

Let $\theta\in\Theta$ denote the treatment effect in the target population, $y$ the target data, and $L(\theta;y)$ the target-data likelihood. Define source-informed and weakly informative models
\[
M_s:\ \theta\sim\pi(\theta\mid M_s),
\qquad
M_w:\ \theta\sim\pi(\theta\mid M_w),
\]
where $\pi(\theta\mid M_s)$ is a source-informed prior and $\pi(\theta\mid M_w)$ is a proper weakly informative prior. The source-informed prior reflects the belief that the source evidence can be fully borrowed in the target analysis, whereas the weakly informative prior reflects the belief that little prior information about the target effect should be assumed. The marginal likelihoods corresponding to the source-informed and weakly informative priors are 
\[
f(y\mid M_j)=\int_\Theta L(\theta;y)\pi(\theta\mid M_j)\,d\theta,
\qquad j\in\{s,w\},
\]
and the source-to-weak-component Bayes factor is
\[
\mathrm{BF}_{s,w}(y)=\frac{f(y\mid M_s)}{f(y\mid M_w)}.
\]

An RMP specifies a mixture prior
\[
\pi_{\mathrm{RMP}}(\theta)
=
w_{\mathrm{RMP}}\pi(\theta\mid M_s)
+
(1-w_{\mathrm{RMP}})\pi(\theta\mid M_w),
\qquad w_{\mathrm{RMP}}\in(0,1).
\]
Here, $w_{\mathrm{RMP}}$ is the prior probability assigned to the
source-informed model $M_s$, and $1-w_{\mathrm{RMP}}$ is the prior
probability assigned to the weakly informative model $M_w$.

Given $y$, the RMP posterior is also a two-component mixture:
\[
\pi_{\mathrm{RMP}}(\theta\mid y)
=
w_{\mathrm{RMP}}^*(y)\pi(\theta\mid M_s,y)
+
(1-w_{\mathrm{RMP}}^*(y))\pi(\theta\mid M_w,y),
\]
where
$\pi(\theta\mid M_s,y)$ and $\pi(\theta\mid M_w,y)$
denote the posterior distributions of $\theta$ under $M_s$ and $M_w$,
respectively, and
$w_\mathrm{RMP}^*(y)=\mathbb{P}(M_s\mid y)$ denotes the posterior
probability assigned to the source-informed model.

Applying Bayes' theorem, we have
\[
w_{\mathrm{RMP}}^*(y)
=
\frac{w_{\mathrm{RMP}}f(y\mid M_s)}
{w_{\mathrm{RMP}}f(y\mid M_s)+(1-w_{\mathrm{RMP}})f(y\mid M_w)}.
\]
Consequently, the odds of the source-component posterior weight are
\[
\frac{
w_\mathrm{RMP}
^*(y)
}{
1-
w
_\mathrm{RMP}
^*(y)
}
=
\frac{
w
_\mathrm{RMP}
}{
1-
w
_\mathrm{RMP}
}
\,\mathrm{BF}_{s,w}(y).
\]
Equivalently,
\[
w
_\mathrm{RMP}
^*(y)
=
\operatorname{logit}^{-1}
\left\{
\operatorname{logit}
\left(
w
_\mathrm{RMP}
\right)
+
\log \mathrm{BF}_{s,w}(y)
\right\},
\]
where $\operatorname{logit}(x)=\log\{x/(1-x)\}$ and
$\operatorname{logit}^{-1}(x)=1/(1+e^{-x})$.

Although the RMP is conceptually straightforward, it faces several practical challenges in design and analysis:
\begin{enumerate}
\item The proper weakly informative prior     $\pi(\theta\mid M_w)$ must be specified, yet its parameters (e.g., location and scale) can be difficult to justify scientifically; 
\item The RMP prior mixture weight $w_\mathrm{RMP}$ does not have a direct interpretation in terms of the realized amount of borrowing; 
\item The weakly informative posterior is generally not identical to the target-only posterior, so the RMP posterior does not interpolate exactly between source-informed borrowing and no borrowing.
\end{enumerate}

SDB addresses these limitations through a standardized Bayes factor construction that eliminates the need for a separate proper weakly informative prior and makes the source-component initial weight a sharp uniform cap on the source-component posterior weight. When the source prior is constructed by normalizing a source likelihood from the same one-parameter canonical exponential family as the target likelihood, this cap also directly bounds the borrowed source-prior information. The SDB posterior is an explicit mixture of the source-informed and exact no-borrowing posteriors.

\subsection{Standardized Dynamic Borrowing}
\label{sec:sdb-method}

At a high level, SDB can be summarized as follows. It uses a source-informed prior together with a no-borrowing reference prior. Source--target compatibility is quantified by the overlap between the source prior and the no-borrowing posterior. The observed overlap is scaled relative to the largest possible overlap, so the standardized value cannot exceed one and decreases as compatibility becomes lower. This value is then used to attenuate the source-component odds, with less borrowing as compatibility decreases. The resulting SDB posterior is a mixture of the source-informed and no-borrowing posteriors, with the prespecified source-component initial weight serving as a sharp uniform upper bound on the source-component posterior weight.

\subsubsection{Methodology}
Let $\theta\in\Theta$ denote the treatment effect in the target population, $y$ the observed target data, and $L(\theta;y)$ the target-data likelihood. We consider a source-informed model $M_s$ and a no-borrowing model $M_{nb}$.

Assume that the source and target estimands have been appropriately aligned on the same parameter scale. Under the source-informed model $M_s$, the source and target effects are modeled using the same parameter $\theta$,
corresponding to the full-borrowing assumption for this component. Assume that the source prior is a proper, bounded density on this parameter scale. When the source-informed prior is constructed directly from source data $y_s$, it may be obtained by normalizing the source-data likelihood $L_s(\theta;y_s)$, i.e.,
\[
\pi(\theta\mid M_s)
=
\frac{L_s(\theta;y_s)}
{\int_\Theta L_s(t;y_s)\,dt},
\]
where $0<\int_\Theta L_s(t;y_s)\,dt<\infty$ and the resulting normalized density is bounded.

Define
\[
I(y)=\int_\Theta L(t;y)\,dt,
\qquad
f(y\mid M_s)
=
\int_\Theta L(\theta;y)\pi(\theta\mid M_s)\,d\theta,
\]
and the admissible data space as
\[
\mathcal{Y}_{\mathrm{adm}}
=
\left\{
y:
0<I(y)<\infty,\;
0<f(y\mid M_s)<\infty
\right\}.
\]
Assume that $\mathcal Y_{\mathrm{adm}}\neq\varnothing$. For a proper, bounded source prior, this admissibility condition is satisfied for the usual nondegenerate data configurations in common one-parameter canonical exponential-family models, including the normal, binomial, Poisson, and exponential cases, although boundary or degenerate data configurations may require separate treatment. We restrict attention to admissible target-data values so that the source-informed and no-borrowing posteriors are well defined; together with the assumed boundedness of the source prior, these conditions ensure that the subsequent quantities required for the SDB construction are well defined.

For $y\in\mathcal Y_{\mathrm{adm}}$, the source-informed posterior is
\[
\pi(\theta\mid M_s,y)
=
\frac{L(\theta;y)\pi(\theta\mid M_s)}
{f(y\mid M_s)}.
\]

The no-borrowing model uses a flat reference prior $\pi(\theta\mid M_{nb})=c>0$ on the declared parameter scale, where $c$ is arbitrary. This prior may be improper when $\Theta$ is unbounded.  
For $y\in\mathcal Y_{\mathrm{adm}}$, the no-borrowing
posterior is the normalized target likelihood,
\[
\pi(\theta\mid M_{nb},y)
=
\frac{L(\theta;y)}{I(y)},
\]
and the corresponding reference marginal quantity is
\[
f(y\mid M_{nb})=cI(y).
\]

To quantify source--target compatibility, define the overlap function
\[
O(y)=\int_\Theta \pi(\theta\mid M_{nb},y)\pi(\theta\mid M_s)\,d\theta.
\]
The overlap function is the zero-lag cross-correlation of the source-prior and target-only posterior densities and measures how strongly the two density curves align on the declared parameter scale.
Because
\begin{align*}
f(y\mid M_s)
&=
\int_\Theta L(\theta;y)\pi(\theta\mid M_s)\,d\theta\\
&=
I(y)
\int_\Theta
\pi(\theta\mid M_{nb},y)
\pi(\theta\mid M_s)\,d\theta\\
&=
I(y)O(y)\\
&=
\frac{1}{c}f(y\mid M_{nb})O(y),
\end{align*}
the raw source-to-reference Bayes-factor quantity is
\[
\mathrm{BF}(y)
=
\frac{f(y\mid M_s)}{f(y\mid M_{nb})}
=
\frac{1}{c}O(y).
\]
Thus, $O(y)$ and $\mathrm{BF}(y)$ are proportional over $\mathcal Y_{\mathrm{adm}}$.

Because the source prior is bounded and $\pi(\theta\mid M_{nb},y)$ is a probability density for $y\in\mathcal Y_{\mathrm{adm}}$,
\[
O(y)
\leq
\sup_{\theta\in\Theta}\pi(\theta\mid M_s)
<\infty.
\]
Therefore,
\[
O_{\max}
:=
\sup_{u\in\mathcal Y_{\mathrm{adm}}}O(u)
\]
is finite and strictly positive. Define the standardized Bayes factor by
$$
\mathrm{SBF}(y)
:=
\frac{O(y)}{O_{\max}}.
$$
For any fixed choice of the arbitrary scaling constant \(c>0\) for the no-borrowing reference prior, this ratio is equivalently
$$
\mathrm{SBF}(y)
=
\frac{\mathrm{BF}(y)}
{\sup_{u\in\mathcal Y_{\mathrm{adm}}}\mathrm{BF}(u)},
$$
so the arbitrary scaling constant cancels under standardization. It follows immediately that
$$
0<\mathrm{SBF}(y)\leq1,
\qquad
\sup_{y\in\mathcal Y_{\mathrm{adm}}}\mathrm{SBF}(y)=1.
$$
If the supremum $O_{\max}$ is attained in $\mathcal Y_{\mathrm{adm}}$, any
\[
y_c
\in
\arg\max_{y\in\mathcal Y_{\mathrm{adm}}}O(y)
\]
may be designated prospectively as a compatibility anchor. In this case, $O_{\max}=O(y_c)$ and $\mathrm{SBF}(y_c)=1$. If the maximizer is not unique, all maximizers have the same standardized Bayes factor.

Let $w\in(0,1)$ denote the SDB source-component initial weight. The corresponding source-component posterior weight $w^*(y)$ is defined by updating the initial odds with the standardized Bayes factor:
\[
\frac{w^*(y)}{1-w^*(y)}
=
\frac{w}{1-w}\mathrm{SBF}(y),
\]
or equivalently,
\[
w^*(y)
=
\operatorname{logit}^{-1}
\left\{
\operatorname{logit}(w)+\log\mathrm{SBF}(y)
\right\}.
\]
The resulting SDB posterior is
\[
\pi_{\mathrm{SDB}}(\theta\mid y)
=
w^*(y)\pi(\theta\mid M_s,y)
+
(1-w^*(y))\pi(\theta\mid M_{nb},y).
\]

Because $\mathrm{SBF}(y)\leq1$,
\[
w^*(y)\leq w
\quad\text{for all }y\in\mathcal Y_{\mathrm{adm}},
\]
and, because $\sup_{y\in\mathcal Y_{\mathrm{adm}}}\mathrm{SBF}(y)=1$,
\[
\sup_{y\in\mathcal Y_{\mathrm{adm}}}w^*(y)=w.
\]
Therefore, the SDB source-component initial weight is a sharp uniform upper bound on the source-component posterior weight. If a compatibility anchor $y_c$ exists, the bound is attained and
\[
w^*(y_c)=w.
\]

When the flat reference prior can be normalized to a proper prior, with $c$ chosen as its normalizing constant,  $w^*(y)$ is an ordinary posterior model probability after calibration of a prior model weight. Specifically, define
$$\frac{\tilde w}{1-\tilde w}
:=
\frac{w}{1-w}
\frac{1}{
\sup_{u\in\mathcal Y_{\mathrm{adm}}}\mathrm{BF}(u)},
\qquad\text{i.e.,}\quad
\tilde w
:=
\frac{w}{
w+(1-w)\sup_{u\in\mathcal Y_{\mathrm{adm}}}\mathrm{BF}(u)},$$
then
\[
\frac{w^*(y)}{1-w^*(y)}
=
\frac{\tilde w}{1-\tilde w}\mathrm{BF}(y)
=
\frac{w}{1-w}\mathrm{SBF}(y).
\]
Therefore, $w^*(y)=\mathbb{P}(M_s\mid y,\tilde w)$ under a proper flat reference prior with calibrated prior model weight $\tilde w$. 
When the flat reference prior is improper, $w^*(y)$ is obtained as the limit of posterior source-model probabilities under increasingly diffuse proper reference priors with calibrated prior weights (see the
Appendix).

\subsubsection{Uniform Control of Information Borrowing}

To demonstrate how SDB facilitates transparent, prespecified control of the fraction of source-prior information that can be borrowed, we employ the ELIR, which is a method to quantify the information for the parameter of a statistical model in terms of an equivalent ESS~\citep{neuenschwander2020predictively}. Several alternative ESS methods have been proposed, including the standard conjugate ESS (e.g., $a + b$ for a $\mathrm{Beta}(a,b)$), variance- or precision-ratio ESS, and the Morita-Thall-Müller approach~\citep{PennelloThompson2008,Morita2008ESS,neuenschwander2020predictively}.  We focus on ELIR because its curvature-based definition can be applied directly to mixture distributions~\citep{neuenschwander2020predictively}, facilitating decomposition of the SDB posterior. 

Let $i_1(\theta)$ denote the expected Fisher information for a single observation under the target likelihood. For a positive, twice-differentiable density $g(\theta)$ for which the expectation below exists, define the log-density curvature as:
\begin{align*}
i_g(\theta) := -\frac{d^2}{d\theta^2} \log g(\theta).
\end{align*}
Then the ELIR of $g$ is defined as:
\begin{align*}
\mathrm{ELIR}\{g\} := \mathbb{E}_{\theta \sim g} \left[ \frac{i_g(\theta)}{i_1(\theta)} \right].
\end{align*}
The following theorem provides a decomposition of the posterior ELIR for SDB on the canonical parameter scale under the exponential family assumption. 

\begin{theorem}[SDB Posterior ELIR ESS Decomposition]\label{thm:sdb-elir}
Assume that the source and target likelihoods belong to the same regular one-parameter canonical exponential family, with source sample size $n_s$ and fixed dispersion parameter $\phi_s$, target sample size $n$ and fixed dispersion parameter $\phi$, and a common log-partition function $\psi$ and canonical parameter $\theta$.
Also assume that the source likelihood is integrable over the canonical parameter space and that the source prior is obtained by normalizing it. In addition, assume that the SDB admissible data space is nonempty and that the expectations below are finite. For any admissible $y$, let $s_s(\theta; y) := \frac{d}{d\theta} \log \pi(\theta \mid M_s, y)$ and $s_{nb}(\theta; y) := \frac{d}{d\theta} \log \pi(\theta \mid M_{nb}, y)$ denote the score functions of the source and no-borrowing component posteriors, and define the responsibility function of the source component in the posterior mixture by
\[
r_s(\theta;y)=\frac{w^*(y)\pi(\theta\mid M_s,y)}{\pi_{\mathrm{SDB}}(\theta\mid y)}.
\]
Then
\[
\mathrm{ELIR}\{\pi_{\mathrm{SDB}}(\cdot\mid y)\}
=
n+w^*(y)\mathrm{ELIR}\{\pi(\cdot\mid M_s)\}-\gamma(y),
\]
where
\[
\gamma(y)=
\mathbb E_{\theta\sim\pi_{\mathrm{SDB}}(\cdot\mid y)}
\left[
\frac{r_s(\theta;y)(1-r_s(\theta;y))
(s_s(\theta;y)-s_{nb}(\theta;y))^2}{i_1(\theta)}
\right]\geq0.
\]
\end{theorem}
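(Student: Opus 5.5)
The plan is to split the argument into two parts. The first is a general identity for the log-density curvature of any two-component mixture, which produces the penalty $\gamma(y)$. The second is an exponential-family computation showing that every curvature ratio in the theorem is constant in $\theta$. Write $p_s=\pi(\cdot\mid M_s,y)$, $p_{nb}=\pi(\cdot\mid M_{nb},y)$, and $g=\pi_{\mathrm{SDB}}(\cdot\mid y)=w^*p_s+(1-w^*)p_{nb}$, with responsibilities $r_s$ and $r_{nb}=1-r_s$.

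\emph{Step 1 (mixture curvature).} Differentiate $g$ twice and use $p_j'=s_jp_j$ and $p_j''=(s_j'+s_j^2)p_j$. Dividing by $g$ gives
\[
\frac{g'}{g}=\sum_j r_js_j,
\qquad
\frac{g''}{g}=\sum_j r_j\,(s_j'+s_j^2).
\]
Hence
\[
i_g=-(\log g)''=-\frac{g''}{g}+\Bigl(\frac{g'}{g}\Bigr)^2
=\sum_j r_j\,i_{p_j}-\Bigl[\sum_j r_js_j^2-\Bigl(\sum_j r_js_j\Bigr)^2\Bigr].
\]
The bracket is the variance of the scores under the two-point distribution $(r_s,1-r_s)$, which equals $r_s(1-r_s)(s_s-s_{nb})^2$. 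Divide by $i_1$ and take $\mathbb E_{g}$. The last term is then exactly $\gamma(y)$, which is nonnegative because $r_s\in[0,1]$ and $i_1>0$. For the first term, use $g\,r_j=w_jp_j$ to get $\mathbb E_g[r_j i_{p_j}/i_1]=w_j\,\mathrm{ELIR}\{p_j\}$. Together these give
\[
\mathrm{ELIR}\{g\}=w^*\,\mathrm{ELIR}\{p_s\}+(1-w^*)\,\mathrm{ELIR}\{p_{nb}\}-\gamma(y).
\]

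\emph{Step 2 (exponential-family curvatures).} Write the target log-likelihood as $\{\theta T(y)-n\psi(\theta)\}/\phi$ plus terms free of $\theta$. Then $i_1(\theta)=\psi''(\theta)/\phi$. The no-borrowing posterior is the normalized likelihood, so its curvature is $n\psi''/\phi$. Its curvature ratio is therefore identically $n$, and $\mathrm{ELIR}\{p_{nb}\}=n$.

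The source prior is the normalized source likelihood. Its curvature is $n_s\psi''/\phi_s$, so its ratio is the constant $n_s\phi/\phi_s$, and $\mathrm{ELIR}\{\pi(\cdot\mid M_s)\}=n_s\phi/\phi_s$. The source posterior is proportional to the product of the two likelihoods, so its curvature is the sum of the two curvatures. Its ratio is the constant $n+n_s\phi/\phi_s$, giving $\mathrm{ELIR}\{p_s\}=n+\mathrm{ELIR}\{\pi(\cdot\mid M_s)\}$. The key point is that each ratio is constant in $\theta$, so its expectation does not depend on which density it is taken under.

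Substituting these values into Step 1 gives $w^*(n+\mathrm{ELIR}\{\pi(\cdot\mid M_s)\})+(1-w^*)n-\gamma(y)$. This simplifies to the stated formula $n+w^*(y)\,\mathrm{ELIR}\{\pi(\cdot\mid M_s)\}-\gamma(y)$.

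\emph{Main obstacle.} The algebra is routine, so I expect the delicate part to be regularity. The densities must be positive and twice differentiable on the interior of the canonical parameter space. We need $\psi''>0$, which follows from regularity of the family. The expectations defining $\gamma$ and the component ELIRs must be finite, which the theorem assumes. Finally, the expectation in Step 1 can be split linearly only once each piece is known to be finite. I will state these points explicitly, using the theorem's finiteness assumptions, before doing the computation.
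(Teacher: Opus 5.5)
Your proposal is correct and follows essentially the same route as the paper. The paper also combines the two-component mixture-curvature identity $i_g=r_s\kappa_s+r_{nb}\kappa_{nb}-r_sr_{nb}(s_s-s_{nb})^2$ with the observation that every curvature ratio is constant in $\theta$ in the canonical exponential family, giving component ELIRs $n+n_s\phi/\phi_s$ and $n$; the differences are that you derive the identity by direct differentiation where the paper cites the Neuenschwander et al.\ mixture formula, and that you omit the paper's preliminary remark that the normalized source likelihood is log-concave and hence bounded (needed only for $w^*(y)$ to be well defined, not for the identity itself).
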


\begin{proof}
See the Appendix.
\end{proof}

The term $w^*(y)\mathrm{ELIR}\{\pi(\cdot\mid M_s)\}$ is the borrowed source-prior ESS, obtained by multiplying the source-prior ELIR ESS by the source-component posterior weight. Here and below, a borrowed ESS term refers to the weighted ELIR ESS carried by a prior component before the heterogeneity penalty is subtracted. 
Because $w^*(y)\leq w$, $w^*(y)\mathrm{ELIR}\{\pi(\cdot\mid M_s)\}\leq  w\mathrm{ELIR}\{\pi(\cdot\mid M_s)\}$. Therefore, $w$ directly bounds the fraction of the source prior's ELIR ESS that can be borrowed. 

The term $\gamma(y)$ is a separate nonnegative heterogeneity penalty inherent to the ELIR for mixture distributions. It accounts for the information loss arising from mixing distinct component posteriors, so that the posterior ELIR ESS is not simply the weighted sum of the component ELIR ESS values. This penalty is zero when the component posteriors coincide, and its pointwise contribution is attenuated in regions where either component dominates the posterior mixture. 

Consequently, the posterior ELIR ESS decomposes transparently as:
\begin{center}
\textbf{SDB Posterior ESS = Target Sample Size + Borrowed Source-Prior ESS - Heterogeneity Penalty}.
\end{center}
This decomposition provides a clear understanding of information borrowing while fully accounting for posterior uncertainty.

For comparison, suppose that, in addition to the conditions of Theorem~\ref{thm:sdb-elir}, the RMP weakly informative prior can be represented, on the same canonical parameter scale, as a proper normalized likelihood contribution from hypothetical data in the same one-parameter canonical exponential family.  Under this condition,
\[
\mathrm{ELIR}\{\pi(\cdot\mid M_w,y)\}
=
n+\mathrm{ELIR}\{\pi(\cdot\mid M_w)\}.
\]
The same mixture-curvature identity used in the proof of Theorem~\ref{thm:sdb-elir} therefore gives
\begin{align*}
\mathrm{ELIR}\{\pi_{\mathrm{RMP}}(\cdot\mid y)\}
=&
n
+
w_{\mathrm{RMP}}^*(y)
\mathrm{ELIR}\{\pi(\cdot\mid M_s)\}
+
(1-w_{\mathrm{RMP}}^*(y))
\mathrm{ELIR}\{\pi(\cdot\mid M_w)\}\\
&-
\gamma_{\mathrm{RMP}}(y),
\end{align*}
where $\gamma_{\mathrm{RMP}}(y)\geq0$ is the analogous mixture heterogeneity penalty. Consequently,
\begin{center}
\textbf{RMP Posterior ESS = Target Sample Size + Borrowed Source-Prior ESS + Borrowed Weakly-informative-Prior ESS - Heterogeneity Penalty}.
\end{center}
Here, $w_{\mathrm{RMP}}^*(y)\mathrm{ELIR}\{\pi(\cdot\mid M_s)\}$ is the borrowed source-prior ESS, whereas $(1-w_{\mathrm{RMP}}^*(y))
\mathrm{ELIR}\{\pi(\cdot\mid M_w)\}$ is the borrowed weakly-informative-prior ESS. Unlike the SDB decomposition, the RMP decomposition contains an additional ELIR contribution from the weakly informative prior. Thus, the posterior information depends on both the source-prior and weakly-informative-prior information contributions.

\subsection{Power, Type I Error, and Sample Size Planning}
\label{sec:joint-design}

For a fixed SDB source-component initial weight $w$, let $Y$ denote the random target data of size $n$, $T_n=T_n(Y)$ the target-data summary entering the decision rule, and $\mathcal R_n(w)$ the set of values of $T_n$ for which SDB declares success. For a fixed true target effect $\theta^\dagger$, define the power function
\[
\rho(n,w,\theta^\dagger)
=
\mathbb P_{Y\mid\theta^\dagger,n}
\{T_n(Y)\in\mathcal R_n(w)\}.
\]
The true effect $\theta^\dagger$ is used only to evaluate the operating characteristics; the prespecified SDB analysis
continues to treat the target effect as unknown. For a one-sided null space $\Theta_0$, the frequentist Type I error is
\[
\alpha(n,w)
=
\sup_{\theta^\dagger\in\Theta_0}\rho(n,w,\theta^\dagger).
\]
The following theorem shows that, for regular one-parameter canonical exponential-family
models, the supremum of the Type I error rate is attained at the boundary of a one-sided
null parameter space.

\begin{theorem}[Monotone power and boundary attainment]
\label{thm:monotone-power}
Fix $n$ and $w$, and let
\[
\Theta_0=\{\theta^\dagger\in\Theta:\theta^\dagger\geq\theta_0\}.
\]
Suppose that, for admissible $y$, SDB declares success when
\[
\mathbb P_{\mathrm{SDB}}(\theta<\theta_0\mid y)
\geq1-\alpha_0,
\]
and that outcomes outside $\mathcal Y_{\mathrm{adm}}$ are handled by a prespecified convention preserving a monotone rejection region.
Assume that the target-data model is a regular one-parameter canonical exponential family with a scalar sufficient statistic whose support does not depend on the parameter. Then $\rho(n,w,\theta^\dagger)$ is non-increasing in $\theta^\dagger$, and
\[
\alpha(n,w)=\rho(n,w,\theta_0).
\]
\end{theorem}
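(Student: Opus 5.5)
The plan is to reduce the claim to two facts. First, the SDB success region is a lower set in the scalar sufficient statistic. Second, the law of that statistic is stochastically increasing in the canonical parameter.

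Write the target likelihood as $L(\theta;y)=h(y)\exp\{(\theta T_n(y)-n\psi(\theta))/\phi\}$ with $T_n$ the scalar sufficient statistic. The no-borrowing posterior $\pi(\theta\mid M_{nb},y)$ and the source-informed posterior $\pi(\theta\mid M_s,y)$ depend on $y$ only through $T_n$. The same holds for $O(y)$, $\mathrm{SBF}(y)$ and $w^*(y)$, since $O(y)=\int \pi(\theta\mid M_{nb},y)\pi(\theta\mid M_s)\,d\theta$. Hence the success indicator is a function of $T_n$ alone, namely
\[
\mathbb P_{\mathrm{SDB}}(\theta<\theta_0\mid y)=w^*(T_n)P_s(T_n)+\{1-w^*(T_n)\}P_{nb}(T_n),
\]
where $P_j(t)=\mathbb P(\theta<\theta_0\mid M_j,T_n=t)$ for $j\in\{s,nb\}$.

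\textbf{Step 1: the component probabilities are monotone.} For $t<t'$, the ratio of component posterior densities is proportional to $\exp\{\theta(t'-t)/\phi\}$, which is increasing in $\theta$. So each component posterior family has monotone likelihood ratio in $\theta$ indexed by $t$. The posteriors are therefore stochastically increasing in $t$, and $P_s$ and $P_{nb}$ are non-increasing in $t$.

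\textbf{Step 2: the mixture is monotone.} This is where I expect the main difficulty. Differentiating in $t$ gives
\[
\frac{d}{dt}\mathbb P_{\mathrm{SDB}}=w^*P_s'+(1-w^*)P_{nb}'+\frac{dw^*}{dt}\,(P_s-P_{nb}).
\]
The first two terms are nonpositive by Step 1. The cross term has no definite sign, because $w^*$ rises up to the compatibility anchor and falls after it. I would first try to show that the MLR drift terms dominate. In the normal case everything is explicit, with $w^*$ a logistic transform of a Gaussian kernel in the $Z$-score. Failing a general domination argument, I would invoke the hypothesis in the statement, reading ``a prespecified convention preserving a monotone rejection region'' as the requirement that the success set be of the form $\mathcal R_n(w)=\{T_n\le t^*\}$, with inadmissible outcomes assigned consistently. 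That form can also be checked numerically for the chosen $(n,w)$ at the design stage. Either way, the output of this step is that success occurs iff $T_n\le t^*$ for some threshold $t^*$, possibly with randomization or a boundary convention at $t^*$.

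\textbf{Step 3: conclude.} In a regular one-parameter canonical exponential family with parameter-free support, the density of $T_n$ is proportional to $\exp\{(\theta t-n\psi(\theta))/\phi\}$ with respect to a fixed base measure. So the family has monotone likelihood ratio in $T_n$, and $T_n$ is stochastically increasing in $\theta^\dagger$; this is the standard Karlin--Rubin lemma. Hence
\[
\rho(n,w,\theta^\dagger)=\mathbb P_{\theta^\dagger}(T_n\le t^*)
\]
is non-increasing in $\theta^\dagger$. Taking the supremum over $\Theta_0=\{\theta^\dagger\ge\theta_0\}$ then gives $\alpha(n,w)=\rho(n,w,\theta_0)$.
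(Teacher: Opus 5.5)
\textbf{There is a genuine gap in your Step 2, and it is the main step of the theorem.} Your Steps 1 and 3 are fine, and Step 3 is exactly how the paper finishes. But you never prove that $t\mapsto \mathbb P_{\mathrm{SDB}}(\theta<\theta_0\mid T_n=t)$ is non-increasing. You offer two substitutes:
\begin{enumerate}
\item A domination argument for the cross term, which you do not carry out.
\item A reading of the hypothesis about a ``prespecified convention preserving a monotone rejection region'' as the assumption $\mathcal R_n(w)=\{T_n\le t^*\}$.
\end{enumerate}
The second reading is not available. That hypothesis only governs outcomes \emph{outside} $\mathcal Y_{\mathrm{adm}}$, for example the boundary counts $x\in\{0,n\}$ in the binomial case. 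On admissible outcomes, monotonicity of the success region is precisely what must be shown, so the fallback assumes the conclusion. A numerical check at one design pair $(n,w)$ also cannot establish the claim for all $(n,w)$.

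\textbf{The missing idea is that $w^*(y)$ is exactly the weight Bayes' theorem produces under a fixed, data-free prior.} Use $O(y)=f(y\mid M_s)/I(y)$, and note that $O_{\max}$ does not depend on $y$ for fixed $n$. Then
\[
\frac{w^*(y)}{1-w^*(y)}\,\frac{I(y)}{f(y\mid M_s)}=\frac{w}{1-w}\,\frac{1}{O_{\max}}=:\lambda_{n,w}.
\]
Substituting this into the mixture gives
\[
\pi_{\mathrm{SDB}}(\theta\mid y)=\frac{L(\theta;y)\{1+\lambda_{n,w}\pi(\theta\mid M_s)\}}{\int_\Theta L(t;y)\{1+\lambda_{n,w}\pi(t\mid M_s)\}\,dt}.
\]
So the SDB posterior is an ordinary posterior under the $y$-free positive kernel $1+\lambda_{n,w}\pi(\theta\mid M_s)$.

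Your Step 1 argument then applies directly to the mixture, not just to each component. For $t<t'$, the ratio of SDB posterior densities is proportional to $\exp\{\theta(t'-t)/\phi\}$. This gives monotone likelihood ratio, stochastic increase of the posterior in $t$, and hence a lower-set success region.

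You do not need to control the sign of the cross term $\frac{dw^*}{dt}(P_s-P_{nb})$, because it is automatically absorbed. Equivalently, $\frac{d}{dt}\mathbb P_{\mathrm{SDB}}(\theta<\theta_0\mid t)=\phi^{-1}\mathrm{Cov}_t(\mathbf 1\{\theta<\theta_0\},\theta)\le0$. The likelihood-ratio route also avoids differentiating in $t$, which is not directly available for discrete statistics such as the binomial count.
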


\begin{proof}
See the Appendix.
\end{proof}

By reversing the inequalities in Theorem~\ref{thm:monotone-power}, the
analogous conclusions hold for the null space
$\{\theta^\dagger\in\Theta:\theta^\dagger\leq\theta_0\}$ and the
upper-tail success rule
$\mathbb P_{\mathrm{SDB}}(\theta>\theta_0\mid y)\geq1-\alpha_0$.

Under the same canonical exponential-family assumptions and corresponding one-sided decision rule, the monotonicity and boundary-attainment conclusions also apply to an RMP with a fixed prespecified prior mixture, by the monotone likelihood-ratio property of the target-data model. 

The FDA draft Bayesian guidance notes that sample size, prior distributions, success criteria, and other design elements may be adjusted to obtain desired operating characteristics~\citep{FDA2026Bayesian}. Motivated by this principle, we jointly select the design sample size $n_{\mathrm{des}}$ and source-component initial weight
$w_{\mathrm{des}}$ to satisfy prespecified power and Type~I error criteria.

For a planning alternative $\theta_A$, required power $1-\beta$, and maximum acceptable Type I error $\alpha_{\max}$, define, for each source-component initial weight $w\in(0,1)$,
\[
n(w)
=
\min\left\{
n\in\mathbb N:
\rho(n,w,\theta_A)\geq1-\beta
\ \text{and}\
\alpha(n,w)\leq\alpha_{\max}
\right\},
\]
with $n(w)=\infty$ if no sample size satisfies both criteria. 

If $n(w)<\infty$ for at least one $w\in(0,1)$, define the design sample size by
\[
n_{\mathrm{des}}
=
\min_{w\in(0,1)} n(w),
\]
and the set of design weights achieving this minimum by
\[
\mathcal W_{\mathrm{des}}
=
\arg\min_{w\in(0,1)} n(w).
\]
A design weight $w_{\mathrm{des}}\in\mathcal W_{\mathrm{des}}$ may then be selected according to a prespecified tie-breaking criterion if the minimizer is not unique. 

In practice, the minimization can be implemented using a prespecified numerical search over the initial weight $w$, and a pair attaining the smallest acceptable sample size can be selected as $(n_{\mathrm{des}},w_{\mathrm{des}})$.

An RMP can be designed using the same operating-characteristic criteria. However, the SDB initial weight $w$ has a direct borrowing interpretation because $w$ specifies the sharp uniform upper bound on the source-component posterior weight. In contrast, the RMP prior weight $w_{\mathrm{RMP}}$ does not have the same posterior-scale interpretation, and its relationship to the realized source-component posterior weight depends on the specification of the weakly informative prior; recent work has similarly highlighted that posterior borrowing in an RMP depends jointly on the prior mixture weight and the weakly informative prior specification~\citep{weru2026information,ratta2026interplay}. Thus, SDB allows the borrowing–operating-characteristic tradeoff to be prespecified and evaluated on a more interpretable scale through its initial weight.

\subsection{Normal Distribution Case}
\label{sec:normal-case}

Consider the case where both source and target data follow normal distributions. Let the target data $y$ consist of $n$ observations with sample mean $\hat{\theta}$ and fixed observation-level variance $\sigma^2$. Assume $\theta\mid M_s \sim \Normal(\mu_s,\tau_s^{-1})$. Define the target precision as $\tau = n/\sigma^2$ and the conflict measure as $\Delta = \hat{\theta} - \mu_s$.

The key results for SDB in the normal case are:

\begin{enumerate}

\item \textbf{Component Posteriors:}
The source-component posterior follows 
\begin{align*}
\theta \mid M_{s},\hat{\theta} &\sim \Normal\left(\mu_s + \frac{\tau}{\tau_s + \tau}\Delta, (\tau_s + \tau)^{-1}\right).
\end{align*}
The no-borrowing component posterior follows
\begin{align*}
\theta \mid M_{nb},\hat{\theta} &\sim\Normal\left(\hat{\theta} , \tau^{-1}\right).
\end{align*}

\item \textbf{Compatibility Anchor and Posterior Weight:}
The compatibility anchor is $\hat{\theta}_c=\mu_s$ (equivalently $\Delta=0$). The source-component posterior weight is 
\begin{align*}
w^*(\hat{\theta}) = \operatorname{logit}^{-1}\left\{\operatorname{logit}(w) - \frac{\tau_s \tau}{2(\tau_s + \tau)}\Delta^2\right\}.
\end{align*}

\item  \textbf{Geometric Interpretation of Borrowing Decay:}
Let $Z$ denote the conflict between the target estimate and the source prior, standardized by the standard deviation of the prior predictive distribution of $\hat{\theta}$ under the source model $M_s$. This statistic follows a standard normal distribution:
\begin{align*}
Z := \frac{\hat{\theta} - \mu_s}{\sqrt{\operatorname{Var}_{M_s}(\hat{\theta})}} = \frac{\Delta}{\sqrt{\tau^{-1} + \tau_s^{-1}}} \sim \Normal(0,1).
\end{align*}
The SDB update from initial source-component odds to posterior source-component odds satisfies the identity:
\begin{align*}
\frac{w^*(\hat{\theta}) /(1-w^*(\hat{\theta}) )}{w/(1-w)}
=
\mathrm{SBF}(\hat{\theta})
=
\exp\left\{-\frac{1}{2}Z^2\right\}.
\end{align*}
This shows that, when source--target conflict is standardized on the source prior predictive scale, SDB attenuates the source-component odds according to a standard normal kernel in the standardized conflict (Figure~\ref{fig:odds_decay}).

\begin{figure}[H]
    \centering
    \includegraphics[width=0.8\linewidth]{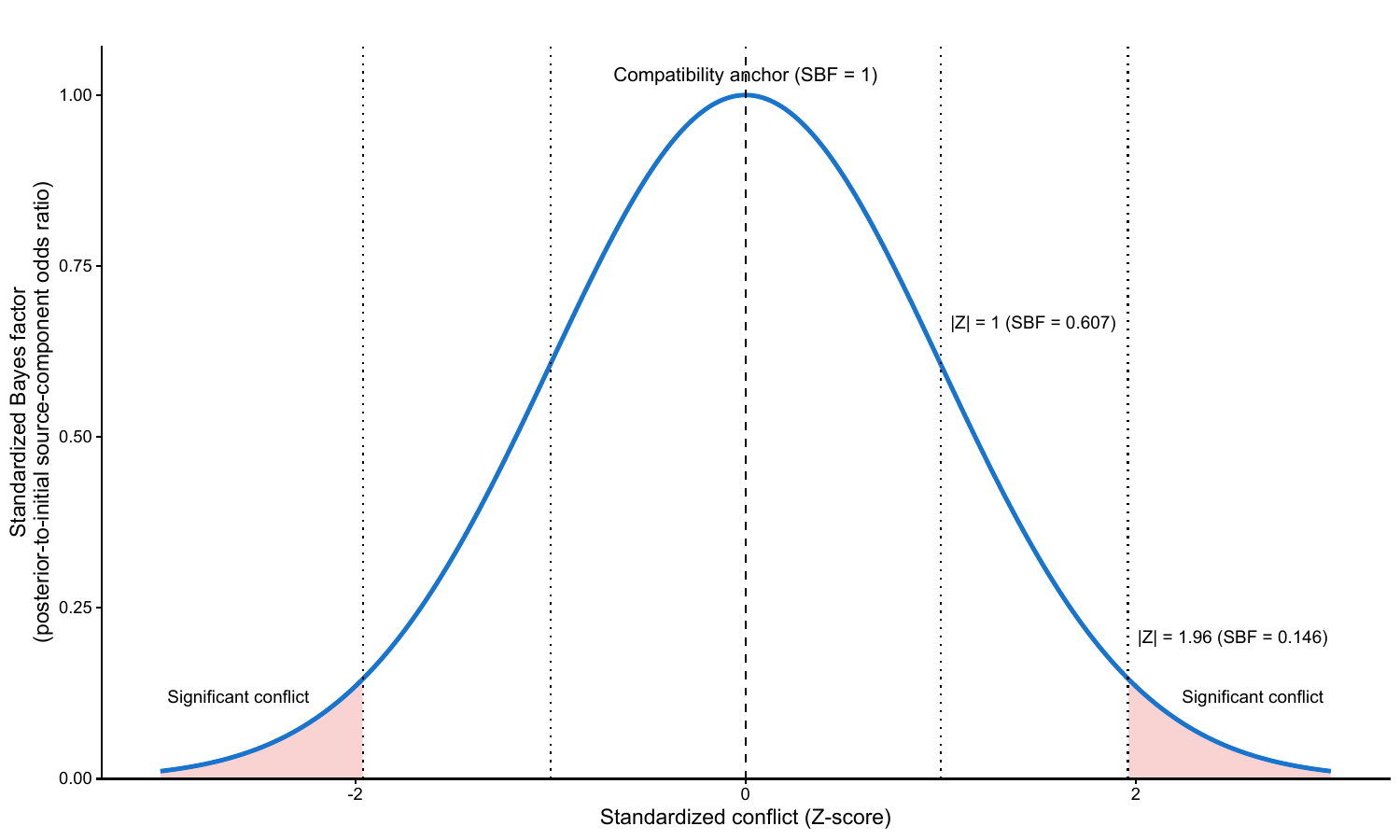}
    \caption{In the normal model, the posterior-to-initial source-component odds ratio in SDB follows a standard normal kernel in the standardized conflict.}
    \label{fig:odds_decay}
\end{figure}

\item \textbf{Posterior Decision Threshold:}
Consider the posterior decision threshold $1-\alpha_0$ used in the SDB success criterion. Under the no-borrowing analysis, for a lower-tail decision rule with null boundary $\theta_0$, the no-borrowing posterior satisfies
\[
\mathbb P(\theta<\theta_0\mid M_{nb},\hat{\theta})
=
\Phi\left\{\sqrt{\tau}(\theta_0-\hat{\theta})\right\},
\]
where $\Phi$ denotes the standard normal distribution function. Therefore,
\[
\mathbb P(\theta<\theta_0\mid M_{nb},\hat{\theta})
\geq1-\alpha_0
\]
is exactly equivalent to
\[
\sqrt{\tau}(\hat{\theta}-\theta_0)
\leq
\Phi^{-1}(\alpha_0).
\]
At the null boundary $\theta=\theta_0$, $\sqrt{\tau}(\hat{\theta}-\theta_0)\sim\Normal(0,1)$, and hence
\begin{align*}
\mathbb P_{\hat{\theta}\mid\theta_0,n}
\left\{
\mathbb P(\theta<\theta_0\mid M_{nb},\hat{\theta})
\geq1-\alpha_0
\right\}
=
\mathbb P_{\hat{\theta}\mid\theta_0,n}
\left\{
\sqrt{\tau}(\hat{\theta}-\theta_0)
\leq\Phi^{-1}(\alpha_0)
\right\}
=
\alpha_0.
\end{align*}
Thus, under no borrowing, the decision rule induced by the posterior threshold $1-\alpha_0$ has an exact one-sided frequentist Type~I error of $\alpha_0$. The analogous upper-tail no-borrowing rule has the same exact property.                 

\end{enumerate}

\subsection{Binomial Distribution Case}
\label{sec:binomial-case}

Consider a binomial target likelihood and a Beta source-informed prior on a common success-probability scale. Let $p\in(0,1)$ denote the target success probability and let
\[
\theta=\operatorname{logit}(p),
\qquad
p=\operatorname{logit}^{-1}(\theta),
\qquad
\theta\in\mathbb{R},
\]
so that $\theta$ is the canonical parameter.

Assume that the source-informed prior on the probability scale is
\[
p\mid M_s
\sim
\mathrm{Beta}(\alpha_s,\beta_s),
\qquad
\alpha_s>0,\quad\beta_s>0.
\]
The normalized-source-likelihood construction based on $x_s$ successes among $n_s$ source subjects, with $0<x_s<n_s$, is obtained as the special case
\[
\alpha_s=x_s,
\qquad
\beta_s=n_s-x_s.
\]
For the target data, let $X=x$ denote the number of observed successes among $n$ subjects, where
\[
X\mid p\sim\mathrm{Binomial}(n,p).
\]
Write $B(\cdot,\cdot)$ for the Beta function. For an interior target count $0<x<n$,
\[
\int_{\mathbb{R}}L(\theta;x)\,d\theta
=
\binom{n}{x}B(x,n-x)
<\infty,
\]
so the target likelihood can be normalized and the no-borrowing component posterior is proper. For $n\geq2$, the following SDB results apply to interior target counts $x\in\{1,\ldots,n-1\}$. Boundary counts are treated using the decision convention described below.

\begin{enumerate}
\item \textbf{Component Posteriors:}
On the probability scale, 
\begin{align*}
&p\mid M_s,x \sim \mathrm{Beta}(\alpha_s + x, \beta_s + n - x), \\
&p\mid M_{nb},x \sim \mathrm{Beta}(x, n - x).
\end{align*}

\item \textbf{Compatibility Anchor and Posterior Weight:}
If $(n-1)\tfrac{\alpha_s}{\alpha_s+\beta_s}$ is not an integer, 
\[
x_c
=
\left\lceil
(n-1)\frac{\alpha_s}{\alpha_s+\beta_s}
\right\rceil
\]
is the unique compatibility anchor; otherwise, there are exactly two adjacent
compatibility anchors,
\[
(n-1)\dfrac{\alpha_s}{\alpha_s+\beta_s}
\qquad\text{and}\qquad
(n-1)\dfrac{\alpha_s}{\alpha_s+\beta_s}+1,
\]
and either may be selected as $x_c$.

The source-component posterior weight is
\[
w^*(x)
=
\operatorname{logit}^{-1}
\left\{
\operatorname{logit}(w)+\log\mathrm{SBF}(x)
\right\},
\]
where
\[
\mathrm{SBF}(x)
=
\frac{B(x+\alpha_s,n-x+\beta_s)}
{B(x,n-x)}
\times
\frac{B(x_c,n-x_c)}
{B(x_c+\alpha_s,n-x_c+\beta_s)}.
\]

\item \textbf{Posterior Decision Threshold:}
Consider the posterior decision threshold $1-\alpha_0$ used in the SDB success criterion. Under the no-borrowing analysis, for an upper-tail decision rule with null boundary $0<p_0<1$, with the event at the boundary counts
interpreted according to the convention below,
\[
\mathbb P_{X\mid p_0,n}
\left\{
\mathbb P(p>p_0\mid M_{nb},X)
\geq1-\alpha_0
\right\}
\longrightarrow
\alpha_0
\qquad\text{as }n\rightarrow\infty.
\]
Thus, under no borrowing, the decision rule induced by the posterior
threshold $1-\alpha_0$ has an asymptotic one-sided frequentist Type~I
error of $\alpha_0$. The analogous result holds for the lower-tail decision rule.

\end{enumerate}

For boundary counts, a monotone convention can be used for the decision rule: the boundary outcome in the direction of benefit is included in the rejection region, whereas the opposite boundary is not. Thus, if a larger success probability is favorable, $x=n$ is classified as a positive result and $x=0$ is not; the assignments are reversed if a smaller success probability is favorable. This convention extends only the decision rule to the full binomial sampling support. The SDB component-posterior, overlap, standardized Bayes factor, and posterior-weight formulas remain restricted to interior counts $0<x<n$.

\section{Example: Redesign of the Best et al. Case Study}

\subsection{Design Inputs and Sample Size Planning}

We redesign the severe-asthma adolescent trial using the example presented by Best et al.~\citep{best2021assessing}. On the log rate-ratio scale, Best et al.\
used an adult-informed source prior of
\[
\theta\mid M_s
\sim
\Normal(-0.694,0.017),
\]
and an RMP with weakly informative prior $\Normal(0,12.4)$, with the variance corresponding to one unit of target information.

For illustration, assume that the adolescent unit-information variance of $12.4$ can be specified accurately at the design stage.
Efficacy is declared when the one-sided 97.5\% posterior upper bound is below zero. In this normal known-variance setting, as shown in Section~\ref{sec:normal-case}, the SDB decision rule based on this posterior threshold has a frequentist Type I error of $0.025$ under no borrowing.
For SDB, the compatibility anchor is
\[
\hat{\theta}_c
=
\mu_s
=
-0.694.
\]
The sample size calculation evaluates power under the maximum source–target compatibility scenario $\theta^\dagger=-0.694$. For the design illustration, suppose that the trial sponsor and the regulatory agency agree that at least 90\% power and a one-sided frequentist Type I error no greater than $\alpha_{\max}=0.05$ are appropriate for the adolescent trial. In this normal setting, the smallest possible integer sample size satisfying the operating-characteristic requirements is $n_{\mathrm{des}}=221$, corresponding to $w_{\mathrm{des}}=0.649$ using the minimum-weight tie-breaker: among the SDB initial weights attaining $n_{\mathrm{des}}$, the smallest weight is selected. 

To illustrate how the minimum sample size $n_{\mathrm{des}}=221$ is determined by the joint power and Type I error constraints, Figure~\ref{fig:sdb-nw} shows the sample-size requirement $n(w)$ as a function of the SDB source-component initial weight $w$. The shape reflects two competing constraints. For smaller $w$, power is limiting: increasing the allowable borrowing shifts the efficacy boundary in the favorable direction, increasing power and reducing the required target sample size. Near $w=0.65$, the design reaches the theoretical minimum $n=221$. Beyond this region, Type I error becomes limiting: additional borrowing can no longer reduce the sample size; instead, more target information is required so that, under the null hypothesis, the target data can sufficiently attenuate the favorable source contribution and maintain Type I error control. Consequently, the sample-size curve increases at larger values of $w$.

\begin{figure}[!htbp]
\centering
\includegraphics[width=0.8\linewidth]{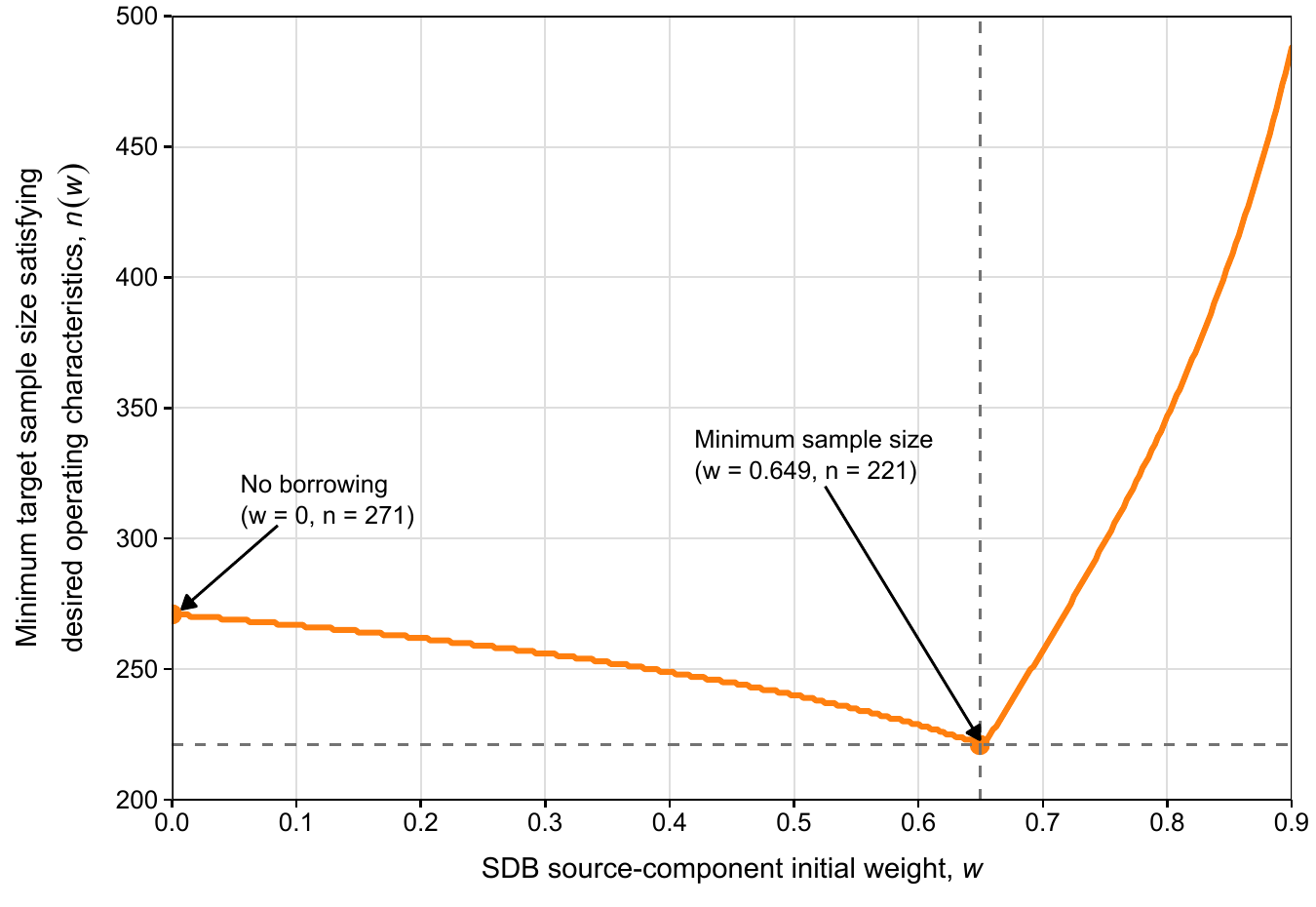}
\caption{Minimum target sample size $n(w)$ required to achieve at least 90\% power while controlling one-sided Type I error at 5\%, as a function of the SDB source-component initial weight $w$. The no-borrowing benchmark is shown at $w=0$, representing the limiting case $w\downarrow0$. }
\label{fig:sdb-nw}
\end{figure}

On the target-information scale, the source-prior ELIR ESS is ${12.4}/{0.017}=729.412$. The selected cap $w_{\mathrm{des}}=0.649$ therefore limits the fraction of the source prior's ELIR ESS that can be borrowed to $0.649$, corresponding to at most $0.649(729.412)\approx 474$ adolescent information units. These quantities are expressed on the adolescent information scale and should not be interpreted as fractions or numbers of adult participants.

In practice, the requirements of power and Type I error may depend on disease rarity, unmet medical need, and recruitment feasibility, and should be discussed and agreed upon between the trial sponsor and the regulatory agency.  Figure~\ref{fig:sdb-design-map} maps the design sample size and the corresponding design weight over a broad range of power and Type I error requirements. For each pair of requirements, $n_{\mathrm{des}}$ is the minimum integer sample size satisfying both criteria, and $w_{\mathrm{des}}$ is a corresponding design weight specifying the maximum allowable borrowing fraction of source information.

\begin{figure}[!htbp]
\centering
\includegraphics[width=1\linewidth]{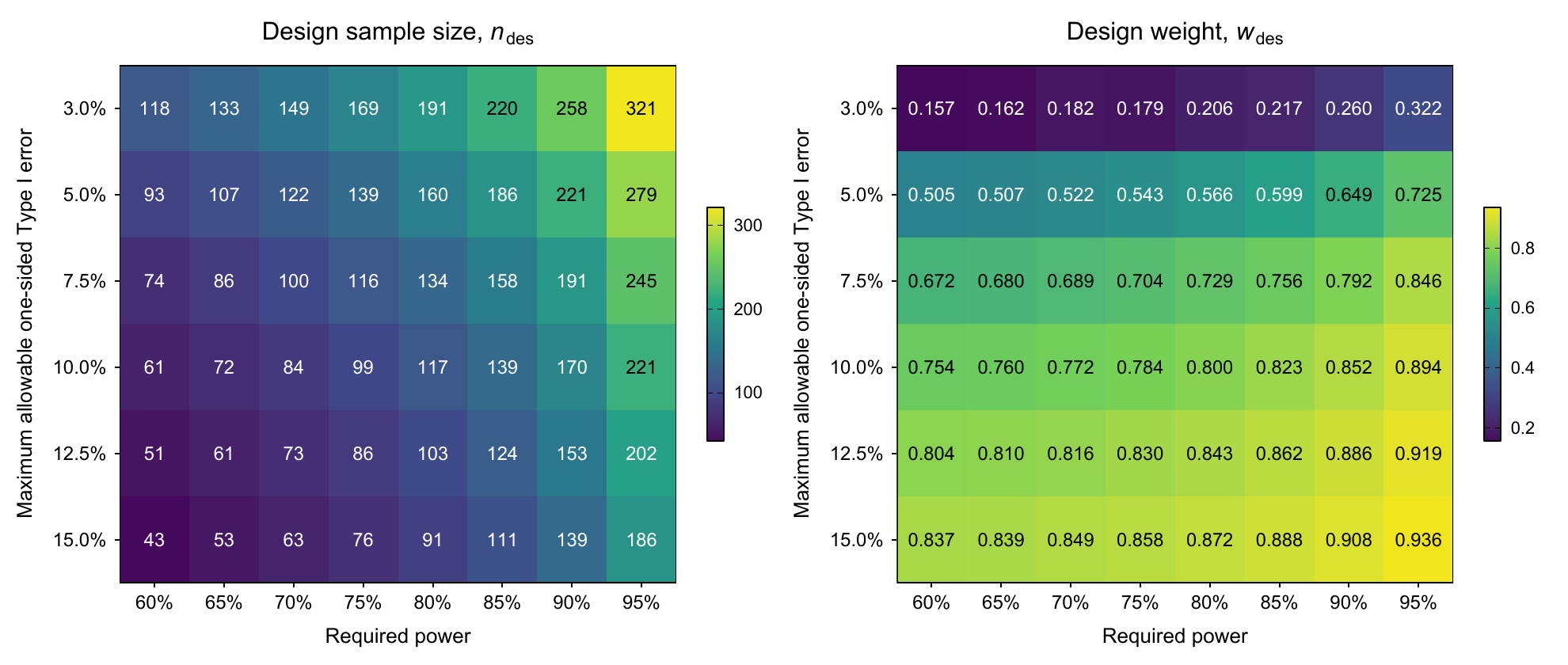}
\caption{Joint SDB feasibility map. Each cell in the left panel shows the minimum sample size satisfying the stated power and Type I error requirements. Each cell in the right panel shows the smallest SDB initial weight attaining the corresponding sample size.}
\label{fig:sdb-design-map}
\end{figure}

As expected, higher required power or a smaller allowable Type I error generally increases $n_{\mathrm{des}}$. The corresponding $w_{\mathrm{des}}$, representing the maximum allowable source-component posterior weight, generally increases with a higher required power or a larger allowable Type I error. 
The displayed scenarios illustrate the tradeoff among recruitment burden, operating characteristics, and allowable posterior borrowing and show how SDB design under prespecified operating-characteristic constraints can support feasibility discussions.

\subsection{Borrowing and Decision Profiles: SDB Versus RMP}
\label{sec:best-borrowing-decision}

We can also calibrate RMPs using the same power and Type I error criteria as for SDB. Consider three RMPs with weakly informative priors containing 1, 0.1, and 0.0001 units of target information, respectively, i.e.,
\[
\theta\mid M_w
\sim
\Normal\left(0,\frac{12.4}{k}\right),
\qquad
k\in\{1,0.1,0.0001\}.
\]
All three RMP specifications attain the same minimum $n=221$, with calibrated source-component prior weights of $0.1252$, $0.0431$, and $0.0014$, respectively. Thus, although a carefully calibrated
RMP can achieve operating characteristics similar to those of SDB, the required RMP prior weights differ substantially and depend strongly on the specification of weakly informative prior.

Figure~\ref{fig:weights-overlay} compares the borrowing profiles between SDB and RMP. In the left panel, SDB is calibrated directly to achieve at least 90\% power while controlling one-sided Type I error at 5\%, yielding a source-component initial weight of approximately 0.649; for comparison, each RMP is assigned a source-component prior weight with the same numerical value, 0.649, without further calibration. SDB caps the source-component posterior weight at 0.649, whereas the RMPs assigned this same numerical value for their prior mixture weights can continue to assign very large posterior weights under substantial source--target conflict. Thus, the numerical value of an RMP source-component prior weight does not by itself constrain realized posterior borrowing, and the resulting decision boundary depends strongly on the weakly-informative-prior specification. In the right panel, SDB and each RMP are calibrated separately using the same operating-characteristic criteria, yielding borrowing and decision profiles that closely agree, although the required RMP source-component prior weights vary substantially with the weakly-informative-prior specification. 

\begin{figure}[!htbp]
\centering
\includegraphics[width=1\linewidth]{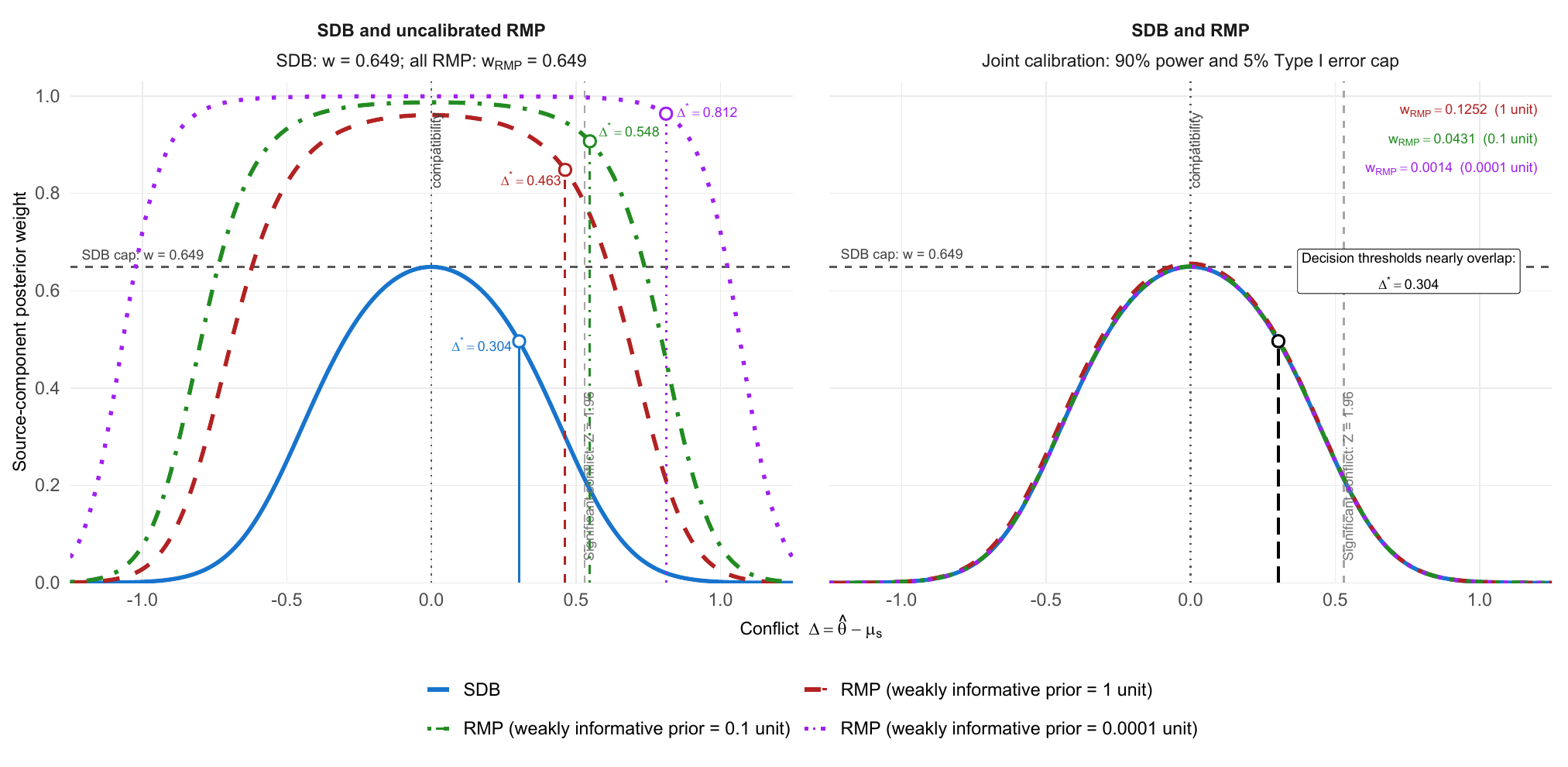}
\caption{Source-component posterior weight versus conflict
$\Delta=\hat\theta-\mu_s$ at $n=221$. Left: SDB is calibrated directly
to achieve at least 90\% power while controlling one-sided Type I error
at 5\%, yielding a source-component initial weight of 0.649; each RMP is
then assigned a source-component prior weight with the same numerical value, 0.649, without further calibration. Method-specific colored vertical lines indicate the
resulting efficacy boundaries. Right: SDB and each RMP are calibrated
separately using the same operating-characteristic criteria. The black
vertical line indicates their nearly identical efficacy boundaries.
In both panels, the rejection region lies to the left of the
corresponding efficacy boundary. Gray vertical lines indicate exact
compatibility and the attenuation-side conflict $Z=1.96$.}
\label{fig:weights-overlay}
\end{figure}
Suppose the planned trial produced the adolescent estimate reported by Best et al.,
\[
\hat\theta=-0.395,
\qquad \text{i.e.,} \quad
\Delta=\hat\theta-\mu_s=0.299.
\]
As a no-borrowing benchmark, the corresponding conventional one-sided frequentist $z$-test has a Type I error level of 0.025. At $n=221$ and $\hat\theta=-0.395$, the test statistic is $z=-1.67$, which does not cross the efficacy threshold $-1.96$. In contrast, SDB and the three calibrated RMPs assign source-component posterior weights of approximately $0.50$ and yield nearly identical results, with $\mathbb P(\theta<0\mid y)=0.9762$ and a 97.5\% posterior upper limit of $-0.0056$, thereby meeting the efficacy criterion. This conclusion is also evident from their nearly identical rejection regions on the conflict scale: the observed $\Delta=0.299$ lies just within the efficacy boundary of approximately $\Delta=0.304$.

\subsection{Tipping Point Analysis}

Once target data are available, it is useful to conduct a tipping-point analysis to assess the sensitivity of the efficacy conclusion to the strength of borrowing. For SDB and RMP, while the corresponding initial or prior weight is specified prospectively, the realized borrowing is reflected in the source-component posterior weight, which is dynamically determined by the observed target data. We therefore define the tipping point on the source-component posterior-weight scale. 
Specifically, the tipping point is the minimum source-component posterior weight required for the observed target data to satisfy the prespecified posterior decision criterion. The corresponding SDB initial weight or RMP prior weight can then be determined conditional on the observed data.

For $n=221$ and $\hat\theta=-0.395$, Figure~\ref{fig:tipping_point} displays the posterior median and 95\%
credible interval as functions of the source-component posterior weight for SDB and for the RMP with a 1-unit weakly informative prior. The SDB decision tips at a source-component posterior weight of approximately 0.476. For the 1-unit RMP, the corresponding tipping weight is approximately 0.480. Their similarity reflects the fact that, for this target dataset, the RMP weakly-informative-component posterior is close to the SDB no-borrowing posterior. 

The SDB initial weight and RMP prior weight corresponding to their respective tipping points are method-specific and have different interpretations. For SDB, the minimum initial weight required for success is approximately $w=0.626$, which has a direct posterior-scale interpretation because $w$ is the maximum source-component posterior weight attained at compatibility.  For the 1-unit RMP, the corresponding minimum source-component prior weight is approximately $w_{\mathrm{RMP}}=0.115$, representing the prior probability assigned to the source-informed model relative to the weakly informative model; however, its numerical value does not by itself directly indicate the corresponding posterior weight, which also depends on the specification of the weakly informative prior. 

\begin{figure}[H]
\centering
\includegraphics[width=1\linewidth]{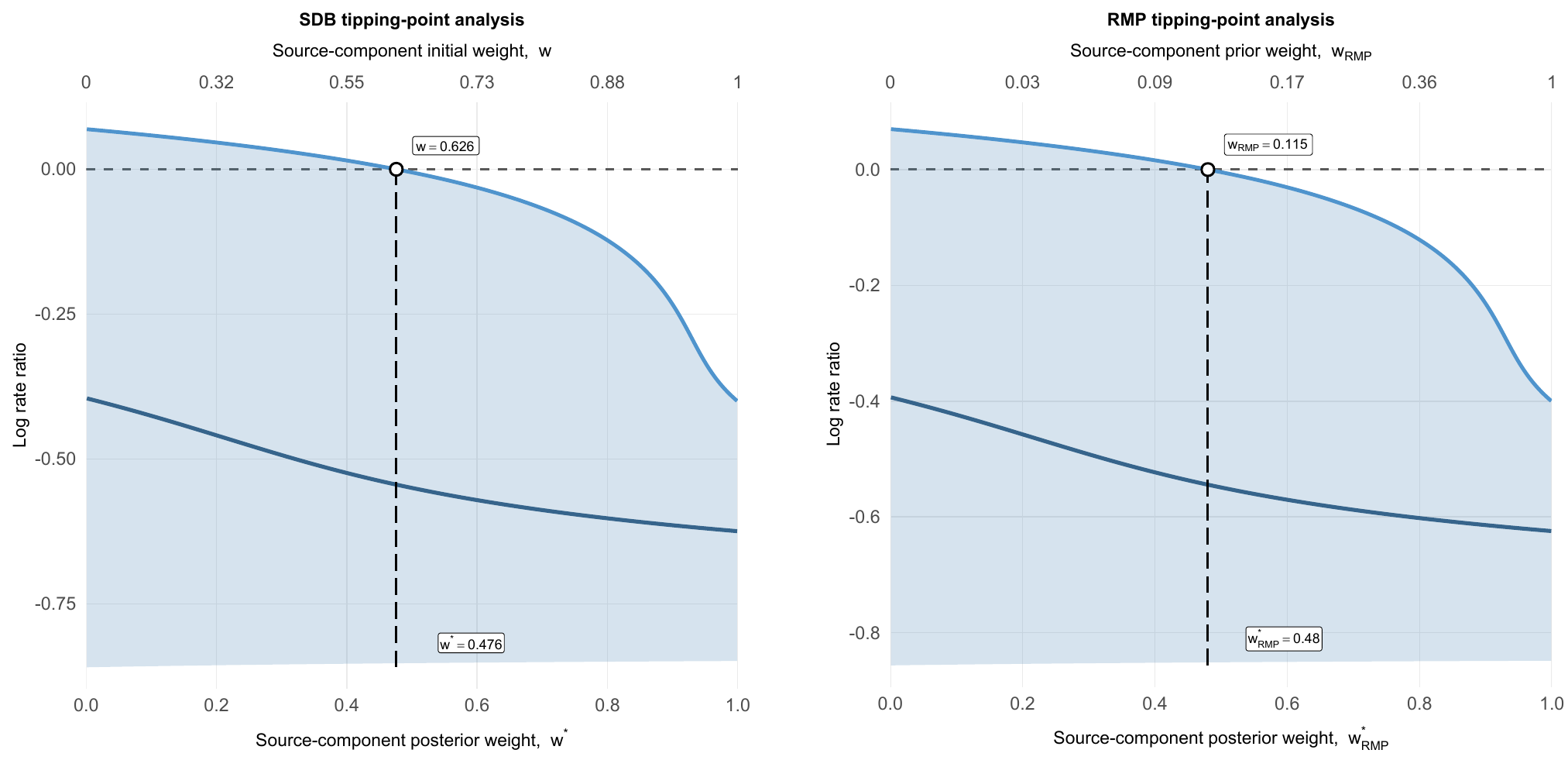}
\caption{Tipping-point analyses for SDB (left) and the RMP with a
1-unit weakly informative prior (right), with $n=221$ and
$\hat{\theta}=-0.395$. In each panel, the shaded ribbon represents the
95\% posterior credible interval, the dark blue line represents the posterior median, and the upper blue line represents the 97.5\% posterior upper limit. The vertical dashed line and open circle identify the tipping point at which the 97.5\% posterior upper limit reaches zero. The corresponding source-component posterior tipping weights are annotated near the bottom of each vertical line; the SDB initial tipping weight (left) and the RMP prior tipping weight (right) are annotated near the top.}
\label{fig:tipping_point}
\end{figure}

\section{Binomial Numerical Illustration}
\label{sec:binomial-illustration}

To illustrate the joint SDB design for a discrete endpoint, consider a hypothetical single-arm pediatric target study with a favorable response probability $p$. Suppose that an adult source study observes $x_s=36$
responses among $n_s=120$ subjects. Under the normalized-source-likelihood construction,
\[
p\mid M_s
\sim
\mathrm{Beta}(36,84).
\]

For the pediatric study, let
\[
X\mid p
\sim
\mathrm{Binomial}(n,p),
\]
and consider
\[
H_0:p\leq p_0,
\qquad
H_1:p>p_0,
\qquad
p_0=0.15.
\]
The source-compatible planning alternative is $p_A=0.30$. Set $\alpha_0=0.025$ and declare efficacy under SDB when
\[
\mathbb P_{\mathrm{SDB}}(p>p_0\mid X)
\geq
1-\alpha_0
=
0.975.
\]
As shown in Section~\ref{sec:binomial-case}, under no borrowing the SDB decision rule induced by this posterior threshold has an asymptotic frequentist Type I error of $0.025$.

For comparison with RMP, consider two proper weakly informative components for binary endpoints:
\[
p\mid M_w
\sim
\mathrm{Beta}(1,1)
\qquad\text{or}\qquad
p\mid M_w
\sim
\mathrm{Beta}\left(\frac{1}{2},\frac{1}{2}\right),
\]
corresponding to the uniform and Jeffreys priors, respectively~\citep{schmidli2014robust}. Each RMP also uses the same posterior criterion
\[
\mathbb P_{\mathrm{RMP}}(p>p_0\mid X)
\geq
0.975.
\]
For each method, the target sample size and corresponding source-component weight---the initial weight for SDB and the prior weight for RMP---are jointly selected to achieve at least 90\% power at $p_A=0.30$ while controlling the one-sided Type~I error at $p_0$ to no more than $\alpha_{\max}=0.05$. The calibration is performed separately for SDB and for each RMP specification. 
As a conventional benchmark, we use the exact one-sided binomial test at nominal level $0.025$. Its sample size is selected to achieve at least 90\% power at $p_A=0.30$.
For SDB, boundary counts are handled according to the convention in Section~\ref{sec:binomial-case}. Power and Type~I error are evaluated exactly under the binomial sampling distribution. The results are shown in Table~\ref{tab:binomial-design}. For each borrowing method, the smallest weight among those attaining the minimum sample size is selected.

\begin{table}[!htbp]
\centering
\scriptsize
\setlength{\tabcolsep}{3.5pt}
\caption{Conventional exact binomial benchmark and jointly calibrated
borrowing designs in the binomial numerical illustration. Power is
evaluated at $p_A=0.30$, and Type~I error is evaluated at $p_0=0.15$.}
\label{tab:binomial-design}
\begin{tabular}{lcccccc}
\toprule
Method
&
\shortstack{Target\\sample size}
&
\shortstack{Selected\\weight}
&
\shortstack{Maximum posterior\\source-component weight}
&
\shortstack{Efficacy\\rule}
&
Power
&
\shortstack{Type~I\\error}
\\
\midrule
Exact binomial test
&
79
&
--
&
--
&
$X\geq19$
&
0.901
&
0.023
\\
SDB
&
64
&
0.550
&
0.550
&
$X\geq15$
&
0.903
&
0.049
\\
RMP, uniform
&
64
&
0.033
&
0.164
&
$X\geq15$
&
0.903
&
0.049
\\
RMP, Jeffreys
&
64
&
0.072
&
0.386
&
$X\geq15$
&
0.903
&
0.049
\\
\bottomrule
\end{tabular}

\vspace{0.4em}
\begin{minipage}{0.96\linewidth}
\footnotesize
\textit{Note: For SDB, the selected weight is the source-component initial weight $w$; for RMP, it is the source-component prior weight $w_{\mathrm{RMP}}$. The maximum posterior source-component weight is
evaluated over the interior target counts at the selected target sample size and, for all three borrowing methods in this example, occurs at $x=19$. Within each borrowing method, the smallest weight attaining the minimum
sample size is reported.} 
\end{minipage}
\end{table}

For the conventional exact one-sided binomial test at nominal level $0.025$, the smallest sample size achieving at least 90\% power at $p_A=0.30$ is $n=79$, with efficacy declared for $X\geq19$. The corresponding Type~I error is $0.023$, slightly below $0.025$ because of the discreteness of the binomial outcome. 

The jointly selected SDB design has $n_{\mathrm{des}}=64$ and $w_{\mathrm{des}}=0.550$, with efficacy declared for $X\geq15$. The separately calibrated RMP designs with the uniform and Jeffreys weakly informative components both select the same target sample size of 64 and the same efficacy rule $X\geq15$, and therefore have the same power and Type I error in this example. 

For $n_{\mathrm{des}}=64$, the compatibility-anchor formula in Section~\ref{sec:binomial-case} gives
\[
x_c
=
\left\lceil
(64-1)\frac{36}{120}
\right\rceil
=
19,
\]
corresponding to a target response rate of $19/64=0.297$, numerically close to the source response rate $36/120=0.3$. At this count, the SDB source-component posterior weight attains its selected cap of
$0.550$. In this example, the posterior source-component weights for the two RMPs are also maximized at $x=19$. Their selected prior weights of $0.033$ and $0.072$ yield maximum posterior source-component weights
of $0.164$ and $0.386$ for the uniform and Jeffreys specifications, respectively. Thus, unlike the SDB initial weight, an RMP prior weight is not itself the maximum posterior source-component weight; the latter
depends on both the prior weight and the selected weakly informative prior.

\section{Discussion}

SDB can be viewed as a standardized modification of the RMP framework. By combining the source-informed and no-borrowing components through a Bayes factor standardized by its supremum over the admissible data space, SDB eliminates the need to specify a separate weakly informative prior and makes the source-component initial weight a sharp uniform upper bound on the source-component posterior weight.

This construction provides several useful interpretations of borrowing. Under the conditions of Theorem~\ref{thm:sdb-elir}, the posterior ELIR
decomposition identifies the borrowed source-prior ESS and gives the
source-component initial weight a direct interpretation as the maximum
allowable fraction of the source-prior ELIR ESS that can be borrowed
in the target analysis.
In the normal setting, SDB additionally provides a simple geometric interpretation of dynamic borrowing:  the posterior odds of the source component are attenuated according to a standard normal kernel in the conflict $Z$-score. The conflict $Z$-score scales the observed conflict against the total uncertainty---comprising both target sampling error and source-prior variance---anticipated under the source model. Consequently, borrowing is maximal at the compatibility anchor, the source-component posterior odds retain approximately 61\% of their initial value at $|Z|=1$ and approximately 15\% at $|Z|=1.96$, with stronger attenuation for larger conflicts. 

Covariate adjustment can be incorporated in SDB by specifying the target likelihood $L(\theta; y)$ from a covariate-adjusted model and using a source prior $\pi(\theta|M_s)$ for the same covariate-adjusted estimand. For a normal likelihood and normal source prior, the normal-case SDB results apply directly. For non-normal cases, this can be done by using normal approximations to the covariate-adjusted likelihood and source prior.

From a design perspective, the source-component initial weight and the sample size for SDB can be jointly selected under simultaneous prespecified constraints on power and Type I error. The same operating-characteristic calibration can be applied to RMP. In the normal case study, the SDB and all three RMPs calibrated under the same operating-characteristic constraints achieved similar borrowing and decision profiles. In the binomial illustration, SDB and the two RMP specifications also selected the same target sample size and efficacy rule after separate calibration. Thus, the principal advantages of SDB over RMP under careful calibration are not superior attainable operating characteristics, but rather the elimination of a separate weakly informative prior specification, a borrowing parameter with a direct posterior-cap interpretation, sharp uniform control of the source-component posterior weight, an exact no-borrowing limit as the source-component posterior weight approaches zero, a simple ELIR decomposition without an additional weakly-informative-prior contribution, a straightforward illustration of tradeoffs among operating characteristics and maximum allowable posterior borrowing, and a geometric interpretation of borrowing attenuation in the normal setting.

Several limitations should be recognized. First, the standardized construction requires the integrated quantities defining the source-informed and no-borrowing components to be finite and positive over the admissible data space, together with the stated conditions on the source prior; boundary observations may require a prespecified decision convention, as in the binomial example.
Second, the ELIR decomposition and monotone-power theorem rely on the stated one-parameter canonical exponential family conditions. Applications outside these settings may require numerical evaluation and do not automatically inherit the same theoretical properties. Third, SDB controls borrowing only after the source and target populations have been judged sufficiently comparable and the estimand has been appropriately aligned. The standardized compatibility measure and borrowing cap do not replace the clinical assessment of transportability. 

Taken together, SDB provides a practical framework for using credible source information when target data are limited. Its intuitive compatibility measure and sharp posterior-weight cap make dynamic borrowing more transparent to prespecify, evaluate, and communicate, while preserving data-adaptive borrowing according to source--target compatibility.

\section{Conclusion}

SDB provides a direct and interpretable formulation of dynamic borrowing. Through standardization of the source-informed versus no-borrowing comparison, it makes the source-component initial weight a sharp uniform upper bound on the source-component posterior weight and avoids specification of a separate weakly informative prior. Under the stated one-parameter canonical exponential-family conditions, the posterior ELIR decomposition provides a target-information-scale interpretation of this borrowing cap; in the normal model, the source-component posterior odds are attenuated according to a standard normal kernel in the conflict $Z$-score. Together with joint calibration of the borrowing cap and target sample size under prespecified operating-characteristic constraints, these properties provide a transparent framework for prespecifying, evaluating, and communicating dynamic borrowing. 

\bmsection*{Acknowledgments}
The authors express their gratitude to Dr. James Travis and Dr. Frank Harrell for their valuable and constructive feedback during the development of this work.

\bmsection*{Disclaimer}
The contents, views or opinions expressed in this publication or presentation are those of the authors and do not necessarily reflect official policy or position of the U.S. Food and Drug Administration. Mention of trade names, commercial products, or organizations does not imply endorsement by the U.S. Government.

\bmsection*{Funding}

No specific funding was received for this work.

\bmsection*{Disclosure statement}

The authors declare no potential conflict of interest.

\bibliography{bibliography}



\setcounter{theorem}{0}
\setcounter{equation}{0} 

\newpage
\appendix

\begin{center}
{\Large\bfseries Appendix}
\end{center}

\section{SDB Weight Interpretation for Improper Reference Priors}

\textbf{Claim:} When the flat reference prior is improper, for each $y\in\mathcal Y_{\mathrm{adm}}$, the SDB source-component posterior weight $w^*(y)$ is the limit of posterior source-model probabilities obtained under a sequence of proper reference priors with calibrated prior model weights.

\begin{proof}
Let
\[
O_{\max}
=
\sup_{u\in\mathcal Y_{\mathrm{adm}}}O(u),
\]
which is finite and strictly positive under the conditions in Section~\ref{sec:sdb-method}. Let $K_m\uparrow\Theta$ be an increasing sequence of compact sets with $0<|K_m|<\infty$, and define the proper reference priors
\[
\pi_{K_m}(\theta)
=
\frac{\mathbf 1_{K_m}(\theta)}{|K_m|}.
\]

Fix $y\in\mathcal Y_{\mathrm{adm}}$. The corresponding reference-model marginal likelihood is
\[
f_{K_m}(y\mid M_{nb})
=
\int_\Theta L(\theta;y)\pi_{K_m}(\theta)\,d\theta
=
\frac{1}{|K_m|}
\int_\Theta
\mathbf 1_{K_m}(\theta)L(\theta;y)\,d\theta.
\]
Since $K_m\uparrow\Theta$ and $I(y)=\int_\Theta L(\theta;y)\,d\theta>0$, the integral is strictly positive for all sufficiently large $m$; we consider such $m$ below. Hence, the source-to-reference Bayes factor is
\[
\mathrm{BF}_{K_m}(y)
=
\frac{f(y\mid M_s)}
     {f_{K_m}(y\mid M_{nb})}
=
\frac{|K_m|f(y\mid M_s)}
{\int_\Theta
\mathbf 1_{K_m}(\theta)L(\theta;y)\,d\theta}.
\]

For each $m$, define the calibrated prior model odds by
\[
\frac{\tilde w_m}{1-\tilde w_m}
=
\frac{w}{1-w}
\frac{1}{|K_m|O_{\max}}.
\]
Because $w\in(0,1)$, $0<O_{\max}<\infty$, and $0<|K_m|<\infty$, this defines a proper prior model weight
$\tilde w_m\in(0,1)$.

The posterior source-model odds under the proper reference prior $\pi_{K_m}$ are therefore
\begin{align*}
\frac{
\mathbb P(M_s\mid y,K_m,\tilde w_m)
}{
1-\mathbb P(M_s\mid y,K_m,\tilde w_m)
}
&=
\frac{\tilde w_m}{1-\tilde w_m}
\mathrm{BF}_{K_m}(y)
=
\frac{w}{1-w}
\frac{f(y\mid M_s)}
{O_{\max}
\int_\Theta
\mathbf 1_{K_m}(\theta)L(\theta;y)\,d\theta}.
\end{align*}

Since $K_m\uparrow\Theta$ and the likelihood is nonnegative, the Monotone Convergence Theorem gives
\[
\int_\Theta
\mathbf 1_{K_m}(\theta)L(\theta;y)\,d\theta
\uparrow
\int_\Theta L(\theta;y)\,d\theta
=
I(y).
\]
Therefore,
\begin{align*}
\frac{
\mathbb P(M_s\mid y,K_m,\tilde w_m)
}{
1-\mathbb P(M_s\mid y,K_m,\tilde w_m)
}
\longrightarrow
\frac{w}{1-w}
\frac{f(y\mid M_s)}{O_{\max}I(y)}
=
\frac{w}{1-w}
\frac{O(y)}{O_{\max}}
=
\frac{w}{1-w}\mathrm{SBF}(y)=\frac{w^*(y)}{1-w^*(y)}.
\end{align*}
Hence, by continuity of the map from odds to probability,
\[
\mathbb P(M_s\mid y,K_m,\tilde w_m)
\longrightarrow
w^*(y).
\]
Thus, $w^*(y)$ is the limiting posterior source-model probability under a sequence of proper reference priors with calibrated prior model weights.
\end{proof}

\section{Proof of Theorem~\ref{thm:sdb-elir}}

\begin{proof}
Under the stated canonical exponential-family assumptions,
\[
\frac{d^2}{d\theta^2}\log L_s(\theta;y_s)
=
-\frac{n_s}{\phi_s}\psi''(\theta)
\leq0,
\]
so the source likelihood is log-concave in $\theta$. Since it is
integrable by assumption, its normalized density is proper and bounded.
Thus, the source prior satisfies the boundedness condition required for
the SDB construction.

Fix $y$ and write $w^*(y)$ as $w^*$,
$\pi(\theta\mid M_s,y)$ as $\pi_{s,y}(\theta)$,
$\pi(\theta\mid M_{nb},y)$ as $\pi_{nb,y}(\theta)$, and
$\pi_{\mathrm{SDB}}(\theta\mid y)$ as $\pi_{\mathrm{SDB},y}(\theta)$. Then
\[
\pi_{\mathrm{SDB},y}
=
w^*\pi_{s,y}+(1-w^*)\pi_{nb,y}.
\]
Define the component score functions and curvatures by
\begin{align*}
s_s(\theta;y)&=\frac{d}{d\theta}\log\pi_{s,y}(\theta),
&
s_{nb}(\theta;y)&=\frac{d}{d\theta}\log\pi_{nb,y}(\theta),\\
\kappa_s(\theta;y)&=-\frac{d^2}{d\theta^2}\log\pi_{s,y}(\theta),
&
\kappa_{nb}(\theta;y)&=-\frac{d^2}{d\theta^2}\log\pi_{nb,y}(\theta).
\end{align*}
The responsibility functions are
\[
r_s(\theta;y)=\frac{w^*\pi_{s,y}(\theta)}{\pi_{\mathrm{SDB},y}(\theta)},
\qquad
r_{nb}(\theta;y)=1-r_s(\theta;y).
\]

Using the information formula for mixture distributions~\citep{neuenschwander2020predictively}, the curvature of the posterior mixture is:
\begin{align*}
i_{\mathrm{SDB}}(\theta\mid y)
= &\frac{1}{\pi_{\mathrm{SDB},y}^2}
\left(w^* \pi_{s,y} s_s + (1 - w^*)\pi_{nb,y} s_{nb}\right)^2 \\
&- \frac{1}{\pi_{\mathrm{SDB},y}}
\left(w^* \pi_{s,y}(s_s^2 - \kappa_s)
+ (1 - w^*)\pi_{nb,y}(s_{nb}^2 - \kappa_{nb})\right) \\
=& (r_s s_s + r_{nb} s_{nb})^2
- (r_s s_s^2 + r_{nb} s_{nb}^2) 
 + r_s\kappa_s + r_{nb}\kappa_{nb} \\
=& r_s\kappa_s + r_{nb}\kappa_{nb}
- r_s r_{nb} (s_s - s_{nb})^2. 
\quad \text{(using $r_s + r_{nb} = 1$)}
\end{align*}

Therefore,
\begin{align*}
\mathrm{ELIR}\{\pi_{\mathrm{SDB}}(\cdot\mid y)\}
&=
\mathbb E_{\theta\sim\pi_{\mathrm{SDB},y}}
\left[\frac{i_{\mathrm{SDB}}(\theta\mid y)}{i_1(\theta)}\right]\\
&= \mathbb E_{\theta\sim\pi_{\mathrm{SDB},y}}\left[\frac{r_s\kappa_s}{i_1(\theta)}\right] + \mathbb E_{\theta\sim\pi_{\mathrm{SDB},y}}\left[\frac{r_{nb}\kappa_{nb}}{i_1(\theta)}\right] - \mathbb E_{\theta\sim\pi_{\mathrm{SDB},y}}\left[\frac{r_s r_{nb} (s_s - s_{nb})^2}{i_1(\theta)}\right] \\
&=w^*\mathbb E_{\theta\sim\pi_{s,y}}
\left[\frac{\kappa_s(\theta;y)}{i_1(\theta)}\right]
+(1-w^*)\mathbb E_{\theta\sim\pi_{nb,y}}
\left[\frac{\kappa_{nb}(\theta;y)}{i_1(\theta)}\right]
-\gamma(y)\\
&=
w^*\mathrm{ELIR}\{\pi_{s,y}\}
+
(1-w^*)\mathrm{ELIR}\{\pi_{nb,y}\}
-
\gamma(y),
\end{align*}
where
\begin{align*}
\gamma(y)
&=
\mathbb E_{\theta\sim\pi_{\mathrm{SDB},y}}
\left[
\frac{r_s r_{nb}}{i_1}(s_s-s_{nb})^2
\right]
.
\end{align*}
Because $r_s,r_{nb}\geq0$ and $i_1(\theta)>0$ under the regular canonical exponential-family assumptions, $\gamma(y)\geq0$.

Under the stated canonical exponential-family assumptions,
\[
\pi_{s,y}(\theta)\propto L_s(\theta;y_s)L(\theta;y),
\qquad
\pi_{nb,y}(\theta)\propto L(\theta;y),
\]
and hence
\begin{align*}
\kappa_s(\theta; y) &= -\frac{\partial^2}{\partial \theta^2} \log \pi_{s,y} = -\frac{\partial^2}{\partial \theta^2} \left(\log L_s(\theta; y_s) + \log L(\theta; y)\right)= \left(\frac{n_s}{\phi_s} + \frac{n}{\phi}\right)\psi''(\theta), \\
\kappa_{nb}(\theta; y) &= -\frac{\partial^2}{\partial \theta^2} \log \pi_{nb,y} = -\frac{\partial^2}{\partial \theta^2} \log L(\theta; y) = \frac{n}{\phi}\psi''(\theta).
\end{align*}
Since the source-informed prior is $\pi(\theta\mid M_s)\propto L_s(\theta;y_s)$, its log–density differs from $\log L_s(\theta;y_s)$ only by a constant. Hence its curvature is
$$
-\frac{d^2}{d\theta^2}\log \pi(\theta\mid M_s) = \frac{n_s}{\phi_s}\psi''(\theta),
$$
and, by the ELIR definition relative to the target model,
$$
\mathrm{ELIR}\{\pi(\cdot\mid M_s)\}
= \mathbb{E}_{\theta\sim\pi(\cdot\mid M_s)}\left[\frac{(n_s/\phi_s)\psi''(\theta)}{i_1(\theta)}\right]
= \mathbb{E}_{\theta\sim\pi(\cdot\mid M_s)}\left[\frac{(n_s/\phi_s)\psi''(\theta)}{\psi''(\theta)/\phi}\right]
= n_s\frac{\phi}{\phi_s}.
$$

Therefore, the ELIR for the source-component posterior is:
\begin{align*}
\mathrm{ELIR}\{\pi_{s,y}\}
&= \mathbb{E}_{\theta \sim \pi_{s,y}}
\left[\frac{\kappa_s(\theta; y)}{i_1(\theta)}\right] \\
&= \mathbb{E}_{\theta \sim \pi_{s,y}}
\left[\frac{(n_s/\phi_s + n/\phi)\psi''(\theta)}
{\psi''(\theta)/\phi}\right] \\
&= n_s\frac{\phi}{\phi_s} + n \\
&= \mathrm{ELIR}\{\pi(\cdot\mid M_s)\} + n.
\end{align*}
Similarly, for the no-borrowing component posterior:
\begin{align*}
\mathrm{ELIR}\{\pi_{nb,y}\} &= \mathbb{E}_{\theta \sim \pi_{nb,y}}\left[\frac{\kappa_{nb}(\theta; y)}{i_1(\theta)}\right] = \mathbb{E}_{\theta \sim \pi_{nb,y}}\left[\frac{(n/\phi)\psi''(\theta)}{\psi''(\theta)/\phi}\right] = n.
\end{align*}

Substituting these results into the general expression yields
\begin{align*}
\mathrm{ELIR}\{\pi_{\mathrm{SDB}}(\cdot\mid y)\} &= w^*\left(\mathrm{ELIR}\{\pi(\cdot\mid M_s)\}+ n\right) + (1 - w^*)n - \gamma(y) \\
&= n + w^*\mathrm{ELIR}\{\pi(\cdot\mid M_s)\} - \gamma(y).
\end{align*} 
\end{proof}

\section{Proof of Theorem~\ref{thm:monotone-power}}

\begin{proof}
Fix $n$ and $w$. By the definition of the standardized Bayes factor,
\begin{align*}
\frac{w^*(y)}{1-w^*(y)}
=
\frac{w}{1-w}\mathrm{SBF}(y)
=
\frac{w}{1-w}
\frac{f(y\mid M_s)}{O_{\max}I(y)}
=
\lambda_{n,w}\frac{f(y\mid M_s)}{I(y)},
\end{align*}
where
\[
\lambda_{n,w}
=
\frac{w}{1-w}\frac{1}{O_{\max}}>0.
\]
For fixed $n$, $O_{\max}$ is independent of the observed data $y$.
Thus, for fixed $n$ and $w$, $\lambda_{n,w}$ does not depend on $y$.

Substituting the component posteriors into the SDB mixture posterior gives
\begin{align*}
\pi_{\mathrm{SDB}}(\theta\mid y)
&=
w^*(y)
\frac{L(\theta;y)\pi(\theta\mid M_s)}
{f(y\mid M_s)}
+
(1-w^*(y))
\frac{L(\theta;y)}{I(y)}
\\
&=
\frac{1-w^*(y)}{I(y)}
L(\theta;y)
\left(
1+
\frac{w^*(y)}{1-w^*(y)}
\frac{I(y)}{f(y\mid M_s)}
\pi(\theta\mid M_s)
\right)
\\
&=
\frac{1-w^*(y)}{I(y)}
L(\theta;y)
\left(
1+\lambda_{n,w}\pi(\theta\mid M_s)
\right).
\end{align*}
Define
\[
h_{n,w}(\theta)
=
1+\lambda_{n,w}\pi(\theta\mid M_s).
\]

It remains to identify the normalizing constant. From
\[
\frac{w^*(y)}{1-w^*(y)}
=
\lambda_{n,w}\frac{f(y\mid M_s)}{I(y)},
\]
cross-multiplication gives
\[
w^*(y)I(y)
=
(1-w^*(y))\lambda_{n,w}f(y\mid M_s).
\]
Therefore,
\begin{align*}
(1-w^*(y))
\left(
I(y)+\lambda_{n,w}f(y\mid M_s)
\right)
=
(1-w^*(y))I(y)
+
w^*(y)I(y)
=
I(y),
\end{align*}
and hence
\[
\frac{1-w^*(y)}{I(y)}
=
\frac{1}
{I(y)+\lambda_{n,w}f(y\mid M_s)}.
\]
On the other hand,
\begin{align*}
\int_\Theta L(t;y)h_{n,w}(t)\,dt
=
\int_\Theta L(t;y)
\left(
1+\lambda_{n,w}\pi(t\mid M_s)
\right)dt
=
I(y)+\lambda_{n,w}f(y\mid M_s).
\end{align*}
Combining the preceding two identities gives
\[
\pi_{\mathrm{SDB}}(\theta\mid y)
=
\frac{L(\theta;y)h_{n,w}(\theta)}
{\int_\Theta L(t;y)h_{n,w}(t)\,dt}.
\]
Finally, since $\lambda_{n,w}>0$ and
$\pi(\theta\mid M_s)\geq0$,
\[
h_{n,w}(\theta)
=
1+\lambda_{n,w}\pi(\theta\mid M_s)
\geq1>0.
\]
Thus, for fixed $n$ and $w$, $h_{n,w}$ is a strictly positive kernel
that does not depend on the observed data $y$.

Let $U(Y)$ denote the scalar sufficient statistic and let $u$ be its observed value. For $u_2>u_1$,
\[
\frac{
\pi_{\mathrm{SDB}}(\theta\mid u_2;w)
}{
\pi_{\mathrm{SDB}}(\theta\mid u_1;w)
}
\propto_\theta
\frac{L(\theta;u_2)}{L(\theta;u_1)}.
\]
For a one-parameter canonical exponential family,
\[
\frac{L(\theta;u_2)}{L(\theta;u_1)}
\propto_\theta
\exp\left\{\frac{(u_2-u_1)\theta}{\phi}\right\},
\]
which is increasing in $\theta$. Therefore, the SDB posterior is stochastically increasing in $u$, and
\[
u
\longmapsto
\mathbb P_{\mathrm{SDB}}(\theta<\theta_0\mid u)
\]
is non-increasing. Consequently, the SDB success region is a lower set in $u$. By assumption, the prespecified convention for outcomes outside $\mathcal Y_{\mathrm{adm}}$ preserves this monotonicity over the full sampling support.

The target-data model also has a monotone likelihood ratio in $U(Y)$. Hence, for $\theta_2^\dagger>\theta_1^\dagger$, the sampling
distribution of $U(Y)$ under $\theta_2^\dagger$ is stochastically larger than that under $\theta_1^\dagger$. Because the success region
is a lower set,
\[
\rho(n,w,\theta_2^\dagger)
\leq
\rho(n,w,\theta_1^\dagger).
\]
Thus, $\rho(n,w,\theta^\dagger)$ is non-increasing in $\theta^\dagger$. Since
\[
\Theta_0
=
\{\theta^\dagger\in\Theta:\theta^\dagger\geq\theta_0\},
\]
it follows that
\[
\alpha(n,w)
=
\sup_{\theta^\dagger\in\Theta_0}
\rho(n,w,\theta^\dagger)
=
\rho(n,w,\theta_0).
\]
\end{proof}

\section{Normal Distribution Case}

\subsection{Derivation of Component Posteriors}

The source-component posterior is proportional to the product of source prior and target likelihood:
\begin{align*}
\pi(\theta \mid M_{s},\hat{\theta}) \propto \exp\left\{-\frac{\tau_s}{2}(\theta - \mu_s)^2\right\} \exp\left\{-\frac{\tau}{2}(\theta - \hat{\theta})^2\right\}.
\end{align*}
By completing the square in the exponent: 
\begin{align*}
\tau_s(\theta - \mu_s)^2 + \tau(\theta - \hat{\theta})^2
&= (\tau_s + \tau)\theta^2
- 2(\tau_s\mu_s + \tau\hat{\theta})\theta 
+ (\tau_s\mu_s^2 + \tau\hat{\theta}^2) \\
&= (\tau_s + \tau)
\left(\theta - \frac{\tau_s\mu_s + \tau\hat{\theta}}
{\tau_s + \tau}\right)^2
+ \text{constant}.
\end{align*}
Thus, the source-component posterior is normal with:
\begin{align*}
\text{Precision: } \tau_s + \tau, \qquad
\text{Mean: } \frac{\tau_s\mu_s + \tau\hat{\theta}}{\tau_s + \tau} = \mu_s + \frac{\tau}{\tau_s + \tau}\Delta.
\end{align*}

The no-borrowing component posterior is also normal:
$$\pi(\theta \mid M_{nb},\hat{\theta})=\frac{
L(\theta;\hat{\theta})
}{
\int_{\mathbb R}L(t;\hat{\theta})\,dt
} \quad\Leftrightarrow \quad\theta \mid M_{nb},\hat{\theta}\sim \Normal(\hat{\theta},\tau^{-1}).$$

\subsection{Derivation of the Compatibility Anchor and Posterior Weight}

Under the no-borrowing model, the normalized target likelihood is
\[
\pi(\theta\mid M_{nb},\hat{\theta})
=
\sqrt{\frac{\tau}{2\pi}}
\exp\left\{
-\frac{\tau}{2}
(\theta-\hat{\theta})^2
\right\}.
\]
The source prior is
\[
\pi(\theta\mid M_s)
=
\sqrt{\frac{\tau_s}{2\pi}}
\exp\left\{
-\frac{\tau_s}{2}
(\theta-\mu_s)^2
\right\}.
\]
Therefore, the overlap function is
\begin{align*}
O(\hat{\theta})
=
\int_{\mathbb R}
\pi(\theta\mid M_{nb},\hat{\theta})
\pi(\theta\mid M_s)\,d\theta
=
\frac{\sqrt{\tau\tau_s}}{2\pi}
\int_{\mathbb R}
\exp\left\{
-\frac{1}{2}
\left[
\tau(\theta-\hat{\theta})^2
+
\tau_s(\theta-\mu_s)^2
\right]
\right\}d\theta.
\end{align*}

To evaluate the integral, define
\[
m(\hat{\theta})
=
\frac{\tau\hat{\theta}+\tau_s\mu_s}
{\tau+\tau_s}.
\]
Completing the square gives
\begin{align*}
\tau(\theta-\hat{\theta})^2
+
\tau_s(\theta-\mu_s)^2
&=
(\tau+\tau_s)
\left(
\theta-m(\hat{\theta})
\right)^2+
\frac{\tau\tau_s}{\tau+\tau_s}
(\hat{\theta}-\mu_s)^2.
\end{align*}
Substituting this identity into the overlap integral yields
\begin{align*}
O(\hat{\theta})
=
\frac{\sqrt{\tau\tau_s}}{2\pi}
\exp\left\{
-\frac{\tau\tau_s}{2(\tau+\tau_s)}
(\hat{\theta}-\mu_s)^2
\right\}
\times
\int_{\mathbb R}
\exp\left\{
-\frac{\tau+\tau_s}{2}
\left(
\theta-m(\hat{\theta})
\right)^2
\right\}d\theta.
\end{align*}
The remaining integral is a normal integral:
\[
\int_{\mathbb R}
\exp\left\{
-\frac{\tau+\tau_s}{2}
\left(
\theta-m(\hat{\theta})
\right)^2
\right\}d\theta
=
\sqrt{\frac{2\pi}{\tau+\tau_s}}.
\]
Consequently,
\[
O(\hat{\theta})
=
\sqrt{
\frac{\tau\tau_s}
{2\pi(\tau+\tau_s)}
}
\exp\left\{
-\frac{\tau\tau_s}{2(\tau+\tau_s)}
(\hat{\theta}-\mu_s)^2
\right\}.
\]

Thus, the overlap function is uniquely maximized at
\[
\hat{\theta}_c=\mu_s,
\]
which is selected as the compatibility anchor. Therefore,
\[
O_{\max}=O(\hat{\theta}_c).
\]
The corresponding standardized Bayes factor is
\begin{align*}
\mathrm{SBF}(\hat{\theta})
=
\frac{O(\hat{\theta})}
{O_{\max}}
=
\frac{O(\hat{\theta})}
{O(\hat{\theta}_c)}
=
\exp\left\{
-\frac{\tau\tau_s}{2(\tau+\tau_s)}
(\hat{\theta}-\mu_s)^2
\right\}.
\end{align*}
It follows that
\[
w^*(\hat{\theta})
=
\operatorname{logit}^{-1}
\left\{
\operatorname{logit}(w)
-
\frac{\tau\tau_s}{2(\tau+\tau_s)}
(\hat{\theta}-\mu_s)^2
\right\}.
\]

\subsection{Derivation of the Geometric Interpretation of Borrowing Decay}

Under the source model, we have $\theta \mid M_s \sim \Normal(\mu_s, \tau_s^{-1})$ and $\hat{\theta} \mid \theta \sim \Normal(\theta, \tau^{-1})$. Marginalizing over $\theta$ in the normal--normal hierarchy yields the prior predictive distribution $\hat{\theta} \mid M_s \sim \Normal(\mu_s, \tau^{-1} + \tau_s^{-1})$. Consequently, the conflict $\Delta = \hat{\theta} - \mu_s$ follows $\Normal(0, \tau^{-1} + \tau_s^{-1})$. Defining $Z$ as the standardized conflict:
\begin{align*}
Z := \frac{\Delta}{\sqrt{\operatorname{Var}_{M_s}(\hat{\theta})}} = \frac{\Delta}{\sqrt{\tau^{-1} + \tau_s^{-1}}} \sim \Normal(0,1).
\end{align*}
Squaring $Z$ yields:
\begin{align*}
Z^2 = \frac{\Delta^2}{\tau^{-1} + \tau_s^{-1}} = \frac{\tau \tau_s}{\tau + \tau_s} \Delta^2.
\end{align*}
Substituting this into the exponent of the derived standardized Bayes factor $\mathrm{SBF}(\hat{\theta}) = \exp\left\{-\frac{\tau\tau_s}{2(\tau + \tau_s)}\Delta^2\right\}$, we obtain the identity $\mathrm{SBF}(\hat{\theta}) = \exp\left\{-\frac{1}{2}Z^2\right\}$. Since $\frac{w^*(\hat{\theta})/(1-w^*(\hat{\theta}))}{w/(1-w)} = \mathrm{SBF}(\hat{\theta})$, the posterior-to-initial source-component odds ratio decays according to the standard normal kernel in $Z$.

\section{Binomial Distribution Case}

\subsection{Derivation of Component Posteriors}

On the probability scale, the target likelihood is
\[
L(p;x)
\propto
p^x(1-p)^{n-x},
\qquad 0<p<1.
\]

The source prior is
\[
p\mid M_s
\sim
\mathrm{Beta}(\alpha_s,\beta_s),
\]
with density
\[
\pi(p\mid M_s)
=
\frac{
p^{\alpha_s-1}(1-p)^{\beta_s-1}
}{
B(\alpha_s,\beta_s)
}.
\]
Therefore, the source-component posterior satisfies
\begin{align*}
\pi(p\mid M_s,x)
\propto
L(p;x)\pi(p\mid M_s)
\propto
p^{x+\alpha_s-1}
(1-p)^{n-x+\beta_s-1}.
\end{align*}
Consequently,
\[
p\mid M_s,x
\sim
\mathrm{Beta}
\left(
x+\alpha_s,\,
n-x+\beta_s
\right).
\]

To express this posterior on the canonical logit scale, let
\[
\theta=\operatorname{logit}(p),
\qquad
p=\operatorname{logit}^{-1}(\theta),
\qquad
\frac{dp}{d\theta}=p(1-p).
\]
The density with respect to $d\theta$ is therefore
\begin{align*}
\pi(\theta\mid M_s,x)
&=
\pi(p\mid M_s,x)
\left|\frac{dp}{d\theta}\right|\\
&=
\frac{
p^{x+\alpha_s}(1-p)^{n-x+\beta_s}
}{
B(x+\alpha_s,n-x+\beta_s)
}\\
&=
\frac{1}
{B(x+\alpha_s,n-x+\beta_s)}
\exp\left\{
(x+\alpha_s)\theta
-
(n+\alpha_s+\beta_s)\psi(\theta)
\right\},
\end{align*}
where
\[
\psi(\theta)=\log(1+e^\theta).
\]

For the no-borrowing component, the reference prior is flat on the
canonical logit scale:
\[
\pi(\theta\mid M_{nb})\propto1.
\]
On the probability scale, the corresponding prior density is
\[
\pi(p\mid M_{nb})
\propto
\left|\frac{d\theta}{dp}\right|
=
\frac{1}{p(1-p)}.
\]
Combining this reference prior with the target likelihood gives
\begin{align*}
\pi(p\mid M_{nb},x)
\propto
p^x(1-p)^{n-x}
\frac{1}{p(1-p)}
=
p^{x-1}(1-p)^{n-x-1}.
\end{align*}
Hence, for an interior count $0<x<n$,
\[
p\mid M_{nb},x
\sim
\mathrm{Beta}(x,n-x).
\]

Transforming this posterior back to the canonical logit scale gives
\begin{align*}
\pi(\theta\mid M_{nb},x)
=
\pi(p\mid M_{nb},x)
\left|\frac{dp}{d\theta}\right|
=
\frac{
p^x(1-p)^{n-x}
}{
B(x,n-x)
}
=
\frac{1}{B(x,n-x)}
\exp\left\{
x\theta-n\psi(\theta)
\right\}.
\end{align*}

\subsection{Derivation of the Compatibility Anchor and Posterior Weight}

For an interior count $0<x<n$, the no-borrowing component posterior
derived above is
\[
\pi(\theta\mid M_{nb},x)
=
\frac{1}{B(x,n-x)}
\exp\left\{
x\theta-n\psi(\theta)
\right\}.
\]

The source prior is specified on the probability scale as
\[
p\mid M_s
\sim
\mathrm{Beta}(\alpha_s,\beta_s),
\]
with density
\[
f_s(p)
=
\frac{
p^{\alpha_s-1}(1-p)^{\beta_s-1}
}{
B(\alpha_s,\beta_s)
}.
\]
Because
\[
p=\operatorname{logit}^{-1}(\theta)
\qquad\text{and}\qquad
\frac{dp}{d\theta}=p(1-p),
\]
the corresponding source-prior density on the canonical logit scale is
\begin{align*}
\pi(\theta\mid M_s)
=
f_s(p(\theta))
\left|\frac{dp}{d\theta}\right|
=
\frac{
p^{\alpha_s}(1-p)^{\beta_s}
}{
B(\alpha_s,\beta_s)
}
=
\frac{1}{B(\alpha_s,\beta_s)}
\exp\left\{
\alpha_s\theta
-
(\alpha_s+\beta_s)\psi(\theta)
\right\}.
\end{align*}

Therefore, the overlap function is
\begin{align*}
O(x)
&=
\int_{\mathbb R}
\pi(\theta\mid M_{nb},x)
\pi(\theta\mid M_s)\,d\theta\\
&=
\frac{1}
{B(x,n-x)B(\alpha_s,\beta_s)}
\int_{\mathbb R}
\exp\left\{
(x+\alpha_s)\theta
-
(n+\alpha_s+\beta_s)\psi(\theta)
\right\}d\theta.
\end{align*}

The source-component posterior derived in the preceding subsection is
\[
\pi(\theta\mid M_s,x)
=
\frac{1}
{B(x+\alpha_s,n-x+\beta_s)}
\exp\left\{
(x+\alpha_s)\theta
-
(n+\alpha_s+\beta_s)\psi(\theta)
\right\}.
\]
Rearranging this expression gives
\[
\exp\left\{
(x+\alpha_s)\theta
-
(n+\alpha_s+\beta_s)\psi(\theta)
\right\}
=
B(x+\alpha_s,n-x+\beta_s)
\pi(\theta\mid M_s,x).
\]
Substitution into the overlap function yields
\begin{align*}
O(x)
=
\frac{
B(x+\alpha_s,n-x+\beta_s)
}{
B(x,n-x)B(\alpha_s,\beta_s)
}
\int_{\mathbb R}
\pi(\theta\mid M_s,x)\,d\theta
=
\frac{
B(x+\alpha_s,n-x+\beta_s)
}{
B(x,n-x)B(\alpha_s,\beta_s)
},
\end{align*}
because $\pi(\theta\mid M_s,x)$ is a normalized density.

Since $B(\alpha_s,\beta_s)$ does not depend on $x$, the maximizer can be identified by comparing consecutive overlap values. For
$x=1,\ldots,n-2$,
\begin{align*}
\frac{O(x+1)}{O(x)}
&=
\frac{
B(x+1+\alpha_s,n-x-1+\beta_s)
}{
B(x+\alpha_s,n-x+\beta_s)
}
\times
\frac{
B(x,n-x)
}{
B(x+1,n-x-1)
}
=
\frac{
(x+\alpha_s)(n-x-1)
}{
x(n-x+\beta_s-1)
}.
\end{align*}
It follows that
\[
O(x+1)
\begin{cases}
>O(x),&
x<(n-1)\dfrac{\alpha_s}{\alpha_s+\beta_s},\\[6pt]
=O(x),&
x=(n-1)\dfrac{\alpha_s}{\alpha_s+\beta_s},\\[6pt]
<O(x),&
x>(n-1)\dfrac{\alpha_s}{\alpha_s+\beta_s}.
\end{cases}
\]
Therefore, if $(n-1)\tfrac{\alpha_s}{\alpha_s+\beta_s}$ is not an integer, the overlap increases strictly up to
\[
x_c
=
\left\lceil
(n-1)\frac{\alpha_s}{\alpha_s+\beta_s}
\right\rceil
\]
and decreases strictly thereafter. Hence, this $x_c$ is the unique
maximizer.

If $(n-1)\tfrac{\alpha_s}{\alpha_s+\beta_s}$ is an integer,  there are exactly two adjacent
maximizers,
\[
(n-1)\dfrac{\alpha_s}{\alpha_s+\beta_s}
\qquad\text{and}\qquad
(n-1)\dfrac{\alpha_s}{\alpha_s+\beta_s}+1,
\]
and either may be selected as the compatibility anchor.

For any selected compatibility anchor,
\[
O_{\max}=O(x_c).
\]
The standardized Bayes factor is therefore
\begin{align*}
\mathrm{SBF}(x)
=
\frac{O(x)}{O_{\max}}
=
\frac{O(x)}{O(x_c)}
=
\frac{
B(x+\alpha_s,n-x+\beta_s)
}{
B(x,n-x)
}
\times
\frac{
B(x_c,n-x_c)
}{
B(x_c+\alpha_s,n-x_c+\beta_s)
}.
\end{align*}
The corresponding source-component posterior weight is
\[
w^*(x)
=
\operatorname{logit}^{-1}
\left\{
\operatorname{logit}(w)
+
\log\mathrm{SBF}(x)
\right\}.
\]
By construction,
\[
w^*(x)\leq w
\]
for every interior count $x\in\{1,\ldots,n-1\}$, with equality at
$x_c$ and, when the maximum is not unique, at both maximizing counts.

\subsection{Derivation of the Posterior Decision Threshold}
For an interior count $0<x<n$, the no-borrowing posterior satisfies
\[
p\mid M_{nb},x
\sim
\mathrm{Beta}(x,n-x).
\]
Recall that, for positive integers $a$ and $b$, the cumulative distribution function of a $\mathrm{Beta}(a,b)$ distribution at $q$ equals the upper-tail probability at $a$ for a $\mathrm{Binomial}(a+b-1,q)$ distribution. Therefore,
\[
\mathbb P(p\leq p_0\mid M_{nb},x)
=
\mathbb P\{\mathrm{Binomial}(n-1,p_0)\geq x\}.
\]
Taking complements gives
\begin{align*}
\mathbb P(p>p_0\mid M_{nb},x)
=
\mathbb P\{\mathrm{Binomial}(n-1,p_0)\leq x-1\}
=
\sum_{k=0}^{x-1}
\binom{n-1}{k}
p_0^k(1-p_0)^{n-1-k}.
\end{align*}
By the binomial central limit theorem, the cumulative distribution
function of the standardized $\mathrm{Binomial}(n-1,p_0)$ distribution
converges pointwise to the standard normal distribution function
$\Phi$. Since $\Phi$ is continuous, this convergence is uniform. Hence,
\[
\mathbb P(p>p_0\mid M_{nb},x)
=
\Phi\left\{
\frac{x-1-(n-1)p_0}
{\sqrt{(n-1)p_0(1-p_0)}}
\right\}
+o(1),
\]
uniformly over the interior counts $x$.

To evaluate the frequentist Type~I error, let $X$ denote the random target count. Under the null boundary $p=p_0$,
\[
X\sim\mathrm{Binomial}(n,p_0),
\]
and hence, by the binomial central limit theorem,
\[
\frac{X-np_0}
{\sqrt{np_0(1-p_0)}}
\xrightarrow{d}
\Normal(0,1).
\]
Moreover,
\begin{align*}
\frac{X-1-(n-1)p_0}
{\sqrt{(n-1)p_0(1-p_0)}}
=
\sqrt{\frac{n}{n-1}}\,
\frac{X-np_0}
{\sqrt{np_0(1-p_0)}}
-
\frac{1-p_0}
{\sqrt{(n-1)p_0(1-p_0)}}.
\end{align*}
Since
\[
\sqrt{\frac{n}{n-1}}\longrightarrow1
\qquad\text{and}\qquad
\frac{1-p_0}
{\sqrt{(n-1)p_0(1-p_0)}}\longrightarrow0,
\]
Slutsky's theorem gives
\[
\frac{X-1-(n-1)p_0}
{\sqrt{(n-1)p_0(1-p_0)}}
\xrightarrow{d}
\Normal(0,1).
\]

Because the preceding normal approximation is uniform over the interior counts, replacing the fixed count $x$ by the random count $X$ gives, on the event $0<X<n$,
\[
\mathbb P(p>p_0\mid M_{nb},X)
-
\Phi\left\{
\frac{X-1-(n-1)p_0}
{\sqrt{(n-1)p_0(1-p_0)}}
\right\}
\xrightarrow{\mathbb P}0.
\]
Any prespecified convention for the boundary counts can affect the rejection probability only when $X\in\{0,n\}$. Under $p=p_0$,
\[
\mathbb P_{X\mid p_0,n}\{X\in\{0,n\}\}
=
(1-p_0)^n+p_0^n
\longrightarrow0.
\]
Thus, the boundary-count convention does not affect the limiting Type~I error.

Therefore, by Slutsky's theorem and the continuous mapping theorem,
\[
\mathbb P(p>p_0\mid M_{nb},X)
\xrightarrow{d}
\Phi(Z),
\qquad
Z\sim\Normal(0,1).
\]
Because the limiting distribution is continuous at $1-\alpha_0$,
\begin{align*}
\mathbb P_{X\mid p_0,n}
\left\{
\mathbb P(p>p_0\mid M_{nb},X)
\geq1-\alpha_0
\right\}
\longrightarrow
\mathbb P\{\Phi(Z)\geq1-\alpha_0\}
=
\mathbb P\left\{
Z\geq\Phi^{-1}(1-\alpha_0)
\right\}
=
\alpha_0.
\end{align*}
Thus, under no borrowing, the frequentist Type~I error induced by the posterior decision threshold $1-\alpha_0$ converges to $\alpha_0$.
The lower-tail result follows analogously.

\end{document}